\newtheorem{theorem}{Theorem}
\newtheorem{lemma}[theorem]{Lemma}
\newtheorem{proposition}[theorem]{Proposition}
\theoremstyle{definition}
\newtheorem{remark}[theorem]{Remark}
\newcommand{\geh}{\mathfrak{g}}
\newcommand{\ot}{\otimes}
\newcommand{\Z}{{\mathbb Z}}
\newcommand{\C}{{\mathbb C}}
\newcommand{\Q}{{\mathbb Q}}
\begin{document}
\title{Tetrahedron equation and quantum $R$ matrices for $q$-oscillator representations}

\author{A Kuniba$^1$ and M Okado$^2$}

\address{$^1$ Graduate School of Arts and Sciences, University of Tokyo, Komaba, 153-8902, Japan}

\address{$^2$ Department of Mathematics, Osaka City University, 
Osaka, 558-8585, Japan}

\begin{abstract}
We review and supplement the recent result by the authors 
on the reduction of the three dimensional $R$ (3d $R$) 
satisfying the tetrahedron equation to the quantum $R$ matrices for 
the $q$-oscillator representations of 
$U_q(D^{(2)}_{n+1})$, $U_q(A^{(2)}_{2n})$ and 
$U_q(C^{(1)}_{n})$.
A new formula for the 3d $R$ and 
a quantum $R$ matrix for $n=1$ are presented and 
a proof of the irreducibility of the tensor product of 
the $q$-oscillator representations is detailed.
\end{abstract}

\section{Introduction}
This paper is a summary and supplement of the recent result 
\cite{KO3} by the authors, which is motivated by the earlier works \cite{S,BS,KS}. 
The tetrahedron equation (\ref{TE}) \cite{Zam80} is a three dimensional 
generalization of the Yang-Baxter equation \cite{Bax}.
In \cite{KS} a new prescription was proposed to reduce it to the 
Yang-Baxter equation 
$R_{1,2}R_{1,3}R_{2,3} = R_{2,3}R_{1,3}R_{1,2}$ 
by using the special boundary vectors defined by (\ref{cv}) and (\ref{vb}).
Applied to 
a particular solution of the tetrahedron equation (3d $L$ operator \cite{BS}),
the reduction was shown \cite{KS} to give 
the quantum $R$ matrices for the spin representations \cite{O}. 

In \cite{KO3} a similar reduction was studied for the 
distinguished solution of the tetrahedron equation which we call 3d $R$.
The 3d $R$ was obtained as the intertwiner of the quantum coordinate ring 
$A_q(sl_3)$ \cite{KV}, (The original formula on p194 therein contains a misprint.) 
and was found later also in a different setting \cite{BS}. 
They were shown to coincide  
and to constitute the solution of the 3d reflection equation in \cite{KO1}.
See \cite[App. A]{KO3} for more detail.
The main result of \cite{KO3} was the identification of the 
reduction of the 3d $R$ with the quantum $R$ matrices for 
the quantum affine algebras 
$U_q= U_q(D^{(2)}_{n+1})$, $U_q(A^{(2)}_{2n})$ and 
$U_q(C^{(1)}_{n})$.
Their relevant representations turned out to be  
new infinite dimensional ones which we called 
the {\em $q$-oscillator representations}.
There are two kinds of boundary vectors, which curiously 
correspond to the choices of the above three algebras.
See Remark \ref{re:dyn}. 

This paper contains a summary of these results and 
a few supplements.
The formula (\ref{Rex}) for the 3d $R$ and (\ref{new}) for the 
quantum $R$ matrix for $n=s=t=1$ case are new.
Section \ref{sec:4} recollects a proof of the 
irreducibility of the tensor product of the $q$-oscillator representations
whose detail was omitted in \cite{KO3}.
The result for $n=1$ was reported earlier in \cite{KO2}.
More recently it has been shown that 
the $q$-oscillator representations \cite{KO3} quoted in 
Prop. \ref{pr:repD}-\ref{pr:repC} here actually 
factor through a homomorphism from $U_q$ to the 
$n$ fold tensor product of the
{\it $q$-oscillator algebra} \cite{KOS}.

Throughout the paper we assume that $q$ is generic and 
use the following notations:
\begin{align*}
&(z;q)_m = \prod_{k=1}^m(1-z q^{k-1}),\;\;
(q)_m = (q; q)_m,\;\;
\binom{m}{k}_{\!\!q}= \frac{(q)_m}{(q)_k(q)_{m-k}},
\;\;[m]_q= \frac{q^m-q^{-m}}{q-q^{-1}},
\end{align*}
where the $q$-binomial is to be understood as zero
unless $0 \le k \le m$.
$[m]_{q^t}$ with $t=1$ will simply be denoted by $[m]$.
\section{Reducing the tetrahedron equation to the Yang-Baxter equation}
\subsection{General scheme using boundary vectors}
Let  $F$ be a vector space and 
$R \in \mathrm{End}(F^{\otimes 3})$.
Consider the tetrahedron equation:
\begin{align}\label{TE}
R_{1,2,4}R_{1,3,5}R_{2,3,6}R_{4,5,6}=
R_{4,5,6}R_{2,3,6}R_{1,3,5}R_{1,2,4} \in \mathrm{End}(F^{\otimes 6})
\end{align}
where $R_{i,j,k}$ acts as $R$ on the 
$i,j,k$ th components from the left in $F^{\otimes 6}$.

We recall the prescription which produces  
an infinite family of solutions to the Yang-Baxter equation 
from a solution to the tetrahedron equation based on 
special boundary vectors \cite{KS}.
First we regard (\ref{TE}) as a one-site relation,
and extend it to the $n$-site version.
Let $\overset{\alpha_i}{F},
\overset{\beta_i}{F},
\overset{\gamma_i}{F}$ be the copies of $F$,
where $\alpha_i, \beta_i$ and $\gamma_i\,(i=1,\ldots, n)$
are just labels.
Renaming the spaces $1,2,3$ by them gives
$R_{\alpha_i, \beta_i, 4}
R_{\alpha_i, \gamma_i, 5}
R_{\beta_i, \gamma_i, 6}R_{4,5,6}=
R_{4,5,6}
R_{\beta_i, \gamma_i, 6}
R_{\alpha_i, \gamma_i, 5}
R_{\alpha_i, \beta_i, 4}$
for each $i$.
Thus for any $i$ one can carry $R_{4,5,6}$ through  
$R_{\alpha_i, \beta_i, 4}
R_{\alpha_i, \gamma_i, 5}
R_{\beta_i, \gamma_i, 6}$ to the left 
reversing it into
$R_{\beta_i, \gamma_i, 6}
R_{\alpha_i, \gamma_i, 5}
R_{\alpha_i, \beta_i, 4}$.
Applying this $n$ times leads to
\begin{equation}\label{TEn}
\begin{split}
&\bigl(R_{\alpha_1, \beta_1, 4}
R_{\alpha_1, \gamma_1, 5}
R_{\beta_1, \gamma_1, 6}\bigr)
\cdots 
\bigl(R_{\alpha_n, \beta_n, 4}
R_{\alpha_n, \gamma_n, 5}
R_{\beta_n, \gamma_n, 6}\bigr)R_{4,5,6}\\
&=
R_{4,5,6}
\bigl(R_{\beta_1, \gamma_1, 6}
R_{\alpha_1, \gamma_1, 5}
R_{\alpha_1, \beta_1, 4}
\bigr)
\cdots 
\bigl(R_{\beta_n, \gamma_n, 6}
R_{\alpha_n, \gamma_n, 5}
R_{\alpha_n, \beta_n, 4}
\bigr).
\end{split}
\end{equation}
This is an equality in 
$\mathrm{End}(\overset{\boldsymbol \alpha}{F}\otimes 
\overset{\boldsymbol\beta}{F}\otimes 
\overset{\boldsymbol\gamma}{F}\otimes 
\overset{4}{F}\otimes 
\overset{5}{F}\otimes 
\overset{6}{F})$,
where 
${\boldsymbol\alpha}=(\alpha_1,\ldots, \alpha_n)$
is the array of labels and 
$\overset{\boldsymbol \alpha}{F}= 
\overset{\alpha_1}{F}\otimes \cdots \otimes 
\overset{\alpha_n}{F}\,(= F^{\otimes n})$.
The notations  
$\overset{\boldsymbol\beta}{F}$ and $\overset{\boldsymbol\gamma}{F}$
should be understood similarly.

Next we introduce special boundary vectors.
Suppose one has a vector
$|\chi_s(x)\rangle \in F$
depending on a variable $x$ such that 
its tensor product
\begin{equation}\label{cv}
|\chi_s(x,y)\rangle
=|\chi_s(x)\rangle \otimes |\chi_s(xy)\rangle
\otimes |\chi_s(y)\rangle
\in F\otimes F\otimes F
\end{equation}
satisfies the relation
\begin{equation}\label{RX}
R |\chi_s(x,y)\rangle   = |\chi_s(x,y)\rangle.
\end{equation}
The index $s$ is put to distinguish possibly more than one such vectors.
Suppose there exist vectors in the dual space
\begin{equation*}
\langle \chi_s(x,y)|
=\langle \chi_s(x)| \otimes
\langle \chi_s(xy)| \otimes
\langle \chi_s(y)|
\in F^*\otimes F^*\otimes F^*
\end{equation*}
having the similar property
\begin{equation}\label{XR}
\langle \chi_s(x,y) |R =\langle \chi_s(x,y) | .
\end{equation}
Then evaluating (\ref{TEn}) between 
$\langle \chi_s(x,y) |$ and $|\chi_t(1,1)\rangle$,
one obtains
\begin{align}\label{sdef}
S_{\boldsymbol{\alpha, \beta}}(z)=\varrho^{s,t}(z)\langle \chi_s(z)|
R_{\alpha_1, \beta_1, 3}
R_{\alpha_2, \beta_2, 3}\cdots
R_{\alpha_n, \beta_n, 3}
|\chi_t(1)\rangle
\in \mathrm{End}
(\overset{\boldsymbol\alpha}{F}\otimes \overset{\boldsymbol\beta}{F}),
\end{align}
where $\varrho^{s,t}(z)$ is inserted to control the normalization.
The composition of $R$ and 
matrix elements are taken for the space signified by $3$.
One may simply write it as
$S(z) \in \mathrm{End}(F^{\otimes n} \otimes F^{\otimes n})$
dropping the dummy labels.
The  $S(z)$ depends on $s$ and $t$ 
although they have been temporarily suppressed.
It follows from (\ref{TEn}), (\ref{RX}) and (\ref{XR}) that 
$S(z)$ satisfies the Yang-Baxter equation:
\begin{align}\label{sybe}
S_{\boldsymbol{\alpha, \beta}}(x)
S_{\boldsymbol{\alpha, \gamma}}(xy)
S_{\boldsymbol{\beta,\gamma}}(y)
=
S_{\boldsymbol{\beta,\gamma}}(y)
S_{\boldsymbol{\alpha, \gamma}}(xy)
S_{\boldsymbol{\alpha, \beta}}(x)
\in \mathrm{End}(
\overset{\boldsymbol\alpha}{F}\otimes
\overset{\boldsymbol\beta}{F}\otimes
\overset{\boldsymbol\gamma}{F}).
\end{align}

\subsection{A realization of the scheme}\label{ss:ce}
We focus on the solution $R$ of the tetrahedron equation  
mentioned in the introduction.
Take $F$ to be an infinite dimensional space 
$F = \bigoplus_{m\ge 0}\Q(q)|m\rangle$
with the dual  
$F^\ast = \bigoplus_{m\ge 0}\Q(q)\langle m|$
having the bilinear pairing 
$\langle l |m\rangle = (q^2)_m\delta_{l,m}$.
Then the 3d $R$ is given by
\begin{align}
&R(|i\rangle \otimes |j\rangle \otimes |k\rangle) = 
\sum_{a,b,c\ge 0} R^{a,b,c}_{i,j,k}
|a\rangle \otimes |b\rangle \otimes |c\rangle,\label{Rabc}\\
&R^{a,b,c}_{i,j,k} = \delta^{a+b}_{i+j}\delta^{b+c}_{j+k}
\sum_{\lambda+\mu=b}(-1)^\lambda
q^{ik+b+\lambda(c-a)+\mu(\mu-i-k-1)}
\binom{\lambda+a}{a}_{\!q^2}\binom{i}{\mu}_{\!q^2},\label{Rex}
\end{align}
where $\delta^i_j=1$ if $i=j$ and $0$ otherwise.
The sum is over $\lambda, \mu \in \Z_{\ge 0}$ such that 
$\lambda+\mu=b$ with the further condition $\mu\le i$.
It satisfies $(q^2)_a(q^2)_b(q^2)_c\,R^{a,b,c}_{i,j,k} 
= (q^2)_i(q^2)_j(q^2)_k\,R^{i,j,k}_{a,b,c}$ \cite[eq.(A.1)]{KO3}.
The formula (\ref{Rex})
is simpler than \cite[eq.(2.10)]{KO3}.
Its derivation will be given elsewhere. 

The two boundary vectors satisfying (\ref{RX}) and (\ref{XR})
are known \cite{KS} and given by
\begin{align}
\langle \chi_s(z)| = \sum_{m\ge 0}\frac{z^m}{(q^{s^2})_m}\langle sm|,
\quad
|\chi_s(z)\rangle = \sum_{m\ge 0}\frac{z^m}{(q^{s^2})_m}|sm\rangle
\quad (s=1,2).
\label{vb}
\end{align}

Given two boundary vectors, 
one can construct four families of solutions to the Yang-Baxter equation
$S(z) = S^{s,t}(z)= S^{s,t}(z,q)$ $(s,t=1,2)$ by 
(\ref{sdef}) by substituting  (\ref{Rex}) and (\ref{vb}).
Each family consists of the solutions 
labeled with $n \in \Z_{\ge 1}$. 
They are the matrices acting on $F^{\otimes n} \otimes F^{\otimes n}$ 
whose elements read
\begin{align}
&S^{s,t}(z)\bigl(|{\bf i}\rangle \otimes |{\bf j}\rangle\bigr)
= \sum_{{\bf a},{\bf b}}
S^{s,t}(z)^{{\bf a},{\bf b}}_{{\bf i},{\bf j}}
|{\bf a}\rangle \otimes |{\bf b}\rangle,\label{sact}\\
&S^{s,t}(z)^{{\bf a},{\bf b}}_{{\bf i},{\bf j}}
=\varrho^{s,t}(z)\!\!\!\sum_{c_0, \ldots, c_n\ge 0} 
\frac{z^{c_0}(q^2)_{sc_0}}{(q^{s^2})_{c_0}(q^{t^2})_{c_n}}
R^{a_1, b_1, sc_0}_{i_1, j_1, c_1}
R^{a_2, b_2, c_1}_{i_2, j_2, c_2}\cdots
R^{a_{n\!-\!1}, b_{n\!-\!1}, c_{n\!-\!2}}_{i_{n\!-\!1}, j_{n\!-\!1}, c_{n\!-\!1}}
R^{a_n, b_n, c_{n\!-\!1}}_{i_n, j_n, tc_n},\label{sabij}
\end{align}
where $|{\bf a}\rangle = 
|a_1\rangle \otimes \cdots \otimes |a_n\rangle \in F^{\otimes n}$
for ${\bf a} = (a_1,\ldots, a_n) \in (\Z_{\ge 0})^n$, etc.
By Applying \cite[eq.(A.1)]{KO3} to (\ref{sabij}) it is straightforward to show
\begin{align}\label{s21}
S^{t,s}(z)^{{\bf a},{\bf b}}_{{\bf i},{\bf j}}/\varrho^{t,s}(z)
=\left(\prod_{r=1}^n\frac{z^{\frac{1}{t}j_r}(q^2)_{i_r}(q^2)_{j_r}}
{z^{\frac{1}{t}b_r}(q^2)_{a_r}(q^2)_{b_r}}\right)
S^{s,t}(z^{\frac{s}{t}})^{\overline{\bf i},
\overline{\bf j}}_{\overline{\bf a},\overline{\bf b}}
/\varrho^{s,t}(z^{\frac{s}{t}}),
\end{align}
where $\overline{\bf a} = (a_n,\ldots, a_1)$ is the reverse array of  
${\bf a} = (a_1,\ldots, a_n)$ and similarly for 
$\overline{\bf b}, \overline{\bf i}$ and $\overline{\bf j}$.
Henceforth we shall only consider
$S^{1,1}(z), S^{1,2}(z)$ and $S^{2,2}(z)$ in the rest of the paper.
The matrix elements $R^{a,b,c}_{i,j,k}$ (\ref{Rex}) and 
$S^{s,t}(z)^{{\bf a},{\bf b}}_{{\bf i},{\bf j}}$ 
(\ref{sabij}) are depicted as follows:
\[
\begin{picture}(200,75)(-210,-40)

\put(-220,0){
\put(-80,0){$R^{a,b,c}_{i,j,k}=$}
\put(0,0){\vector(0,1){15}}\put(0,0){\line(0,-1){15}}
\put(-2,19){$\scriptstyle{b}$} \put(-2,-22){$\scriptstyle{j}$}

\put(0,0){\line(3,1){20}}\put(0,0){\vector(-3,-1){20}}
\put(22,6){$\scriptstyle{k}$} \put(-28,-9){$\scriptstyle{c}$}

\put(0,0){\line(-3,1){20}}\put(0,0){\vector(3,-1){20}}
\put(23,-9){$\scriptstyle{a}$}\put(-27,6){$\scriptstyle{i}$}
}

\put(-160,0){$S^{s,t}(z)^{{\bf a},{\bf b}}_{{\bf i},{\bf j}}=$}

\put(3,1){\vector(-3,-1){73}}

\put(-115,-27){$\scriptstyle{\langle \chi_s(z) |}$}

\put(-83,-27){$\scriptstyle{sc_0}$}
 
\put(-48,-16){\vector(0,1){16}}\put(-48,-16){\line(0,-1){16}}
\put(-48,-16){\vector(3,-1){16}} \put(-48,-16){\line(-3,1){16}}
\put(-51,3){$\scriptstyle{b_1}$}
\put(-72,-10){$\scriptstyle{i_1}$}
\put(-31,-26){$\scriptstyle{a_1}$}
\put(-50,-39){$\scriptstyle{j_1}$}

\put(-30,-15){$\scriptstyle{c_1}$}

\put(-15,-5){\vector(0,1){13}}\put(-15,-5){\line(0,-1){13}}
\put(-15,-5){\vector(3,-1){13}}\put(-15,-5){\line(-3,1){13}}
\put(-36,0){$\scriptstyle{i_2}$}
\put(-18,11){$\scriptstyle{b_2}$}
\put(-17,-25){$\scriptstyle{j_2}$}
\put(0,-13){$\scriptstyle{a_2}$}

\put(2,-4){$\scriptstyle{c_2}$}

\multiput(5.1,1.7)(3,1){7}{.} 
\put(6,2){
\put(21,7){\line(3,1){30}}
\put(36,12){\vector(0,1){12}}\put(36,12){\line(0,-1){12}}
\put(36,12){\vector(3,-1){12}}\put(36,12){\line(-3,1){12}}
\put(15,16){$\scriptstyle{i_n}$}
\put(50,5){$\scriptstyle{a_n}$}
\put(33,27){$\scriptstyle{b_n}$}
\put(34,-7){$\scriptstyle{j_n}$}

\put(17,1){$\scriptstyle{c_{n-1}}$}

\put(53,17){$\scriptstyle{tc_n}$}
}
 
\put(75,19){$\scriptstyle{|\chi_t(1) \rangle}$}
 
 \end{picture}
 \]

Due to $\delta$ factors in (\ref{Rex}), 
$S^{s,t}(z)$ obeys 
the conservation law
\begin{align}\label{claw}
S^{s,t}(z)^{{\bf a},{\bf b}}_{{\bf i},{\bf j}}=0\;\;
\text{unless}\;\;
{\bf a}+{\bf b} = {\bf i} + {\bf j}
\end{align}
and the sum (\ref{sabij}) is constrained by the 
$n$ conditions
$b_1+sc_0 = j_1+c_1, \ldots, b_n+c_{n-1}=j_n+tc_n$
leaving effectively a {\em single} sum.
For $(s,t)=(2,2)$, they further enforce a parity constraint
\begin{align}\label{d22}
S^{2,2}(z)^{{\bf a},{\bf b}}_{{\bf i},{\bf j}}=0
\;\;
\text{unless}\;\;
|{\bf a}| \equiv |{\bf i}|,\;\; |{\bf b}| \equiv |{\bf j}| \;\mod 2,
\end{align}
where $|{\bf a}|=a_1+\cdots + a_n$, etc.
Thus we have a direct sum decomposition
\begin{align}
S^{2,2}(z)& 
= S^{+,+}(z)\oplus S^{+,-}(z)\oplus S^{-,+}(z)\oplus S^{-,-}(z),
\label{22pm}\\
S^{\epsilon_1,\epsilon_2}(z)
& \in \mathrm{End}
\bigl((F^{\otimes n})_{\epsilon_1}\otimes 
(F^{\otimes n})_{\epsilon_2}\bigr),
\qquad
(F^{\otimes n})_{\pm} =
\bigoplus_{{\bf a} \in (\Z_{\ge 0})^n,\,
(-1)^{|{\bf a}|}=\pm 1}\Q(q)|{\bf a}\rangle.\label{fpm}
\end{align} 
We dare allow the coexistence of somewhat confusing notations 
$S^{s,t}(z)$ and $S^{\epsilon_1,\epsilon_2}(z)$
expecting that they can be properly distinguished from the context.
(A similar warning applies to $\varrho^{s,t} (z)$ in the sequel.)
We choose the normalization factors as
\begin{align}\label{rst}
\varrho^{1,1} (z)= 
\frac{(z; q)_\infty}{(-zq; q)_\infty},\;\;
\varrho^{1,2} (z)= 
\frac{(z^2; q^2)_\infty}{(-z^2q; q^2)_\infty},\;\;
\varrho^{\epsilon_1,\epsilon_2}(z)= 
\Bigl(\frac{(z; q^4)_\infty}{(zq^2; q^4)_\infty}
\Bigr)^{\epsilon_1\epsilon_2}.
\end{align}
Then the matrix elements of $S^{1,1}(z), S^{1,2}(z)$ 
and $S^{\epsilon_1,\epsilon_2}(z)$ are 
rational functions of $q$ and $z$.

\subsection{Example}\label{ss:ex}
Let us present an explicit form of the matrix element 
(\ref{sabij}) for $n=1$.
It was worked out earlier in \cite[Prop.2]{KO2} by using a formula 
for $R^{a,b,c}_{i,j,k}$ different from (\ref{Rex}). 
For simplicity we concentrate on the case $s=t=1$ and 
write $S^{s,t}(z)^{{\bf a},{\bf b}}_{{\bf i},{\bf j}}$ 
as $S(z)^{a,b}_{i,j}$ with $a,b,i,j \in \Z_{\ge 0}$.
A direct calculation using (\ref{Rex}) and (\ref{rst}) leads to 
\begin{equation}\label{new}
\begin{split}
&S(z)(|i\rangle\otimes |j \rangle) 
= \sum_{a,b\ge 0}S(z)^{a,b}_{i,j}|a\rangle \otimes |b\rangle,\qquad
S(z)^{a,b}_{i,j}
= z^{a-i}\frac{(q^2)_i(q^2)_j}{(q^2)_a(q^2)_b}S(z)^{i,j}_{a,b},\\
&S(z)^{a,b}_{i,j} =\delta^{a+b}_{i+j} 
\sum_{\lambda, \mu}
(-1)^\lambda q^{j(1-a)+\mu(\mu-1)}\binom{j}{\lambda}_{\!\!q^2}
\binom{\lambda+i}{b}_{\!\!q^2}
\frac{(-q;q)_{i-a}(z;q)_{a+\lambda-\mu}}
{(-zq,q)_{i+\lambda-\mu}} \quad
(0\! \le\! a \!\le\! i).
\end{split}
\end{equation}
The last sum is over $\lambda, \mu \in \Z_{\ge 0}$
such that $\lambda + \mu = j$ and $\lambda+i \ge b$.
Thus it is actually a single sum over $\max(0,b-i) \le \lambda \le j$.
The formula (\ref{new}) is simpler than \cite[eq.(2.19)]{KO2}.
From our main Theorem \ref{th:main} it follows that 
$S^{a,b}_{i,j}(z=1) = \delta^a_j\delta^b_i$, 
which is consistent with the above result.

\section{Quantum $R$ matrices for $q$-oscillator representations}\label{sec:R}

The Drinfeld-Jimbo quantum affine algebras without derivation 
$U_q=U_q(D^{(2)}_{n+1})$, $U_q(A^{(2)}_{2n})$ and 
$U_q(C^{(1)}_{n})$ are the Hopf algebras 
generated by $e_i, f_i, k^{\pm 1}_i\, (0 \le i \le n)$ satisfying the relations
\cite{D,Ji}:
\begin{equation*}
\begin{split}
&k_i k^{-1}_i = k^{-1}_i k_i = 1,\;[k_i, k_j]=0,\;
k_ie_jk^{-1}_i = q_i^{a_{ij}}e_j,\;
k_if_jk^{-1}_i = q_i^{-a_{ij}}f_j,\;
[e_i, f_j]=\delta_{ij}\frac{k_i-k^{-1}_i}{q_i-q^{-1}_i},\\
&\sum_{\nu=0}^{1-a_{ij}}(-1)^\nu
e^{(1-a_{ij}-\nu)}_i e_j e_i^{(\nu)}=0,
\quad
\sum_{\nu=0}^{1-a_{ij}}(-1)^\nu
f^{(1-a_{ij}-\nu)}_i f_j f_i^{(\nu)}=0\;\;(i\neq j),
\end{split}
\end{equation*}
where $e^{(\nu)}_i = e^\nu_i/[\nu]_{q_i}!, \,
f^{(\nu)}_i = f^\nu_i/[\nu]_{q_i}!$ with 
$[\nu]_q = [\nu]_q [\nu-1]_q \cdots [1]_q$.
The Cartan matrix $(a_{ij})_{0 \le i,j \le n}$ \cite{Kac} is given by
$a_{i,j} = 2\delta_{i,j}-\max((\log q_j)/(\log q_i),1)\delta_{|i-j|,1}$.
The data $q_i$ is specified 
above the corresponding vertex $i\, (0 \le i \le n)$ 
in the Dynkin diagrams:
\begin{align*}
\begin{picture}(120,60)(0,-15)
\put(-25,0){
\put(50,25){$D^{(2)}_{n+1}$}
\multiput( 0,0)(20,0){3}{\circle{6}}
\multiput(100,0)(20,0){2}{\circle{6}}
\multiput(23,0)(20,0){2}{\line(1,0){14}}
\put(83,0){\line(1,0){14}}
\multiput( 2.85,-1)(0,2){2}{\line(1,0){14.3}} 
\multiput(102.85,-1)(0,2){2}{\line(1,0){14.3}} 
\multiput(59,0)(4,0){6}{\line(1,0){2}} 
\put(10,0){\makebox(0,0){$<$}}
\put(110,0){\makebox(0,0){$>$}}
\put(0,-5){\makebox(0,0)[t]{$0$}}
\put(20,-5){\makebox(0,0)[t]{$1$}}
\put(40,-5){\makebox(0,0)[t]{$2$}}
\put(100,-5){\makebox(0,0)[t]{$n\!\! -\!\! 1$}}
\put(120,-6.5){\makebox(0,0)[t]{$n$}}
\put(3,18){\makebox(0,0)[t]{$q^{\frac{1}{2}}$}}
\put(20,13){\makebox(0,0)[t]{$q$}}
\put(40,13){\makebox(0,0)[t]{$q$}}
\put(100,13){\makebox(0,0)[t]{$q$}}
\put(123,18){\makebox(0,0)[t]{$q^{\frac{1}{2}}$}}
}
\end{picture}
\begin{picture}(120,60)(0,-15)
\put(0,0){
\put(50,25){$A^{(2)}_{2n}$}
\multiput( 0,0)(20,0){3}{\circle{6}}
\multiput(100,0)(20,0){2}{\circle{6}}
\multiput(23,0)(20,0){2}{\line(1,0){14}}
\put(83,0){\line(1,0){14}}
\multiput( 2.85,-1)(0,2){2}{\line(1,0){14.3}} 
\multiput(102.85,-1)(0,2){2}{\line(1,0){14.3}} 
\multiput(59,0)(4,0){6}{\line(1,0){2}} 
\put(10,0){\makebox(0,0){$<$}}
\put(110,0){\makebox(0,0){$<$}}
\put(0,-5){\makebox(0,0)[t]{$0$}}
\put(20,-5){\makebox(0,0)[t]{$1$}}
\put(40,-5){\makebox(0,0)[t]{$2$}}
\put(100,-5){\makebox(0,0)[t]{$n\!\! -\!\! 1$}}
\put(120,-6.5){\makebox(0,0)[t]{$n$}}
\put(3,18){\makebox(0,0)[t]{$q^{\frac{1}{2}}$}}
\put(20,12.5){\makebox(0,0)[t]{$q$}}
\put(40,12.5){\makebox(0,0)[t]{$q$}}
\put(100,12.5){\makebox(0,0)[t]{$q$}}
\put(122,16){\makebox(0,0)[t]{$q^2$}}
}
\end{picture}
\begin{picture}(120,60)(10,-15)
\put(35,0){
\put(50,25){$C^{(1)}_{n}$}
\multiput( 0,0)(20,0){3}{\circle{6}}
\multiput(100,0)(20,0){2}{\circle{6}}
\multiput(23,0)(20,0){2}{\line(1,0){14}}
\put(83,0){\line(1,0){14}}
\multiput( 2.85,-1)(0,2){2}{\line(1,0){14.3}} 
\multiput(102.85,-1)(0,2){2}{\line(1,0){14.3}} 
\multiput(59,0)(4,0){6}{\line(1,0){2}} 
\put(10,0){\makebox(0,0){$>$}}
\put(110,0){\makebox(0,0){$<$}}
\put(0,-5){\makebox(0,0)[t]{$0$}}
\put(20,-5){\makebox(0,0)[t]{$1$}}
\put(40,-5){\makebox(0,0)[t]{$2$}}
\put(100,-5){\makebox(0,0)[t]{$n\!\! -\!\! 1$}}
\put(120,-6.5){\makebox(0,0)[t]{$n$}}
\put(2,15){\makebox(0,0)[t]{$q^2$}}
\put(20,12){\makebox(0,0)[t]{$q$}}
\put(40,12){\makebox(0,0)[t]{$q$}}
\put(100,12){\makebox(0,0)[t]{$q$}}
\put(122,15){\makebox(0,0)[t]{$q^2$}}
}
\end{picture}
\end{align*}
We employ the coproduct $\Delta$ of the form 
$\Delta(k^{\pm 1}_i) = k^{\pm 1}_i\otimes k^{\pm 1}_i$,
$\Delta(e_i) = 1\otimes e_i + e_i \otimes k_i$ and 
$\Delta(f_i)= f_i\otimes 1 + k^{-1}_i\otimes f_i$.

\subsection{$q$-oscillator representations}\label{ss:qor}
We introduce representations of $U_q$
on the tensor product of the Fock space ${\hat F}^{\otimes n}$
or $F^{\otimes n}$, where
${\hat F}= \bigoplus_{m\ge 0}\C(q^{\frac{1}{2}})|m\rangle$
is a slight extension of the coefficient field of $F$.
They all factor through an algebra homomorphism 
from $U_q$ to the $q$-oscillator algebra
as shown in \cite[Prop. 2.1]{KOS}.
As in the previous section 
we write the elements of ${\hat F}^{\otimes n}$ as
$|{\bf m}\rangle =  |m_1\rangle \otimes  \cdots \otimes |m_n \rangle \in 
{\hat F}^{\otimes n}$
for ${\bf m} = (m_1,\ldots, m_n) \in (\Z_{\ge 0})^n$
and describe the changes in ${\bf m}$ by the vectors 
${\bf e}_i = (0,\ldots,0,\overset{i}{1},0,\ldots, 0) \in  \Z^n$.
In the following propositions $\kappa = \frac{q+1}{q-1}$ and 
$x$ is a nonzero parameter.
\begin{proposition}\label{pr:repD}
The following defines an irreducible $U_q(D^{(2)}_{n+1})$ module structure on 
${\hat F}^{\otimes n}$.
\begin{align*}
e_0|{\bf m}\rangle &= x|{\bf m}+{\bf e}_1\rangle,\\
f_0|{\bf m}\rangle  &= i \kappa [m_1] x^{-1}|{\bf m}-{\bf e}_1\rangle,\\
k_0|{\bf m}\rangle  &= -i q^{m_1+\frac{1}{2}}|{\bf m}\rangle,\\
e_j|{\bf m}\rangle &= [m_j]|{\bf m}-{\bf e}_j+{\bf e}_{j+1}\rangle
\quad(1\le  j  \le n-1),\\
f_j|{\bf m}\rangle &= [m_{j+1}]|{\bf m}+{\bf e}_j-{\bf e}_{j+1}\rangle
\quad(1\le  j  \le n-1),\\
k_j|{\bf m}\rangle &= q^{-m_{j}+m_{j+1}}|{\bf m}\rangle
\quad(1\le  j  \le n-1),\\
e_n|{\bf m}\rangle & =  i\kappa[m_n]|{\bf m}-{\bf e}_n\rangle,\\
f_n|{\bf m}\rangle &=|{\bf m}+{\bf e}_n\rangle,\\
k_n|{\bf m}\rangle &= iq^{-m_n-\frac{1}{2}}|{\bf m}\rangle.
\end{align*}
\end{proposition}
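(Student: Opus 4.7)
The plan is to verify the Chevalley--Jimbo--Serre defining relations of $U_q(D^{(2)}_{n+1})$ by direct computation on the basis $\{|{\bf m}\rangle\}$ and then deduce irreducibility from the resulting weight decomposition. The $k_i$'s act diagonally, so $[k_i,k_j]=0$ is immediate, and the commutations $k_ie_jk_i^{-1}=q_i^{a_{ij}}e_j$, $k_if_jk_i^{-1}=q_i^{-a_{ij}}f_j$ reduce to reading off the $k_i$-eigenvalue shift produced by $e_j,f_j$. Using $q_0=q_n=q^{1/2}$, $q_i=q$ for $1\le i\le n-1$, and the eigenvalues $-iq^{m_1+1/2}$, $q^{-m_j+m_{j+1}}$, $iq^{-m_n-1/2}$ of $k_0$, $k_j\,(1\le j\le n-1)$, $k_n$ respectively, these match the Dynkin-diagram Cartan data, including the asymmetric boundary entries $a_{0,1}=-2$, $a_{1,0}=-1$ and their $n$-side analogues.

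Next I would verify $[e_i,f_j]=\delta_{ij}(k_i-k_i^{-1})/(q_i-q_i^{-1})$. For $|i-j|\ge 2$ the operators act on disjoint Fock factors; for $|i-j|=1$ a short direct computation shows $e_if_j|{\bf m}\rangle$ and $f_je_i|{\bf m}\rangle$ produce the same vector with the same scalar, so the commutator vanishes. The only nontrivial cases are $i=j$ at the boundary: here the factor $\kappa=(q+1)/(q-1)=(q^{1/2}+q^{-1/2})/(q^{1/2}-q^{-1/2})$ is essential, and one checks, for instance, $[e_0,f_0]|{\bf m}\rangle=i\kappa([m_1]-[m_1+1])|{\bf m}\rangle$ equals $-i(q^{m_1+1/2}+q^{-m_1-1/2})/(q^{1/2}-q^{-1/2})|{\bf m}\rangle=(k_0-k_0^{-1})/(q^{1/2}-q^{-1/2})|{\bf m}\rangle$ after the elementary identity $[m+1]-[m]=(q^{m+1/2}+q^{-m-1/2})/(q^{1/2}+q^{-1/2})$.

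For the Serre relations, the bulk pairs $1\le i,j\le n-1$ with $|i-j|=1$ have $a_{ij}=-1$, and the standard quadratic Serre identity holds just as for $A$-type hopping operators. The genuinely new work is at the boundary, where each of nodes $0$ and $n$ contributes a quadratic and a cubic Serre relation (the latter from $1-a_{ij}=3$). After substituting the action formulas, the cubic relation collapses to the single telescoping identity $[m_1]-[3]_{q^{1/2}}[m_1+1]+[3]_{q^{1/2}}[m_1+2]-[m_1+3]=0$, which follows from $1-(q+1+q^{-1})q+(q+1+q^{-1})q^2-q^3=0$ together with its $q\to q^{-1}$ counterpart, plus an analogous $f$-side identity. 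Checking these four boundary Serre relations, and confirming that the precisely placed factors of $i$ in $k_0,k_n,e_n,f_0$ produce exactly the signs these identities require, is the main computational obstacle.

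Irreducibility then follows from the weight decomposition. For generic $q$ the joint eigenvalues of $k_0,\ldots,k_n$ listed above uniquely determine ${\bf m}$, so each $|{\bf m}\rangle$ spans its own weight space. A nonzero submodule $N$ must therefore contain some $|{\bf m}\rangle$; applying $e_1,\ldots,e_{n-1}$ in sequence moves all quanta to site $n$ and $e_n$ then annihilates them, yielding $|0,\ldots,0\rangle\in N$. Reversing with $f_n$ followed by $f_{n-1},\ldots,f_1$ reaches every $|{\bf m}'\rangle$, and every coefficient encountered along the way is a product of $[m]$'s, $\kappa$, and powers of $x$, all nonzero at generic $q$, so $N=\hat F^{\otimes n}$.
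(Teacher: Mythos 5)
Your proposal is correct: the direct verification of the Cartan, commutator, and Serre relations (with the key boundary identities $[m]-[m+1]=-(q^{m+1/2}+q^{-m-1/2})/(q^{1/2}+q^{-1/2})$ and $[m]-[3]_{q^{1/2}}[m+1]+[3]_{q^{1/2}}[m+2]-[m+3]=0$ checking out), followed by irreducibility via the one-dimensional joint $k_i$-weight spaces and the transport of quanta by the $e_j$'s and $f_j$'s, is exactly the standard argument. The paper itself states this proposition without proof, deferring to \cite{KO3}, so there is nothing to contrast with; your route is the expected one.
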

\begin{proposition}\label{pr:repA}
The following defines an irreducible $U_q(A^{(2)}_{2n})$ module structure on 
${\hat F}^{\otimes n}$.
\begin{align*}
e_0|{\bf m}\rangle &= x|{\bf m}+{\bf e}_1\rangle,\\
f_0|{\bf m}\rangle  &= i\kappa [m_1]x^{-1}|{\bf m}-{\bf e}_1\rangle,\\
k_0|{\bf m}\rangle  &= -iq^{m_1+\frac{1}{2}}|{\bf m}\rangle,\\
e_j|{\bf m}\rangle &= [m_j]|{\bf m}-{\bf e}_j+{\bf e}_{j+1}\rangle
\quad(1\le  j  \le n-1),\\
f_j|{\bf m}\rangle &= [m_{j+1}]|{\bf m}+{\bf e}_j-{\bf e}_{j+1}\rangle
\quad(1\le  j  \le n-1),\\
k_j|{\bf m}\rangle &= q^{-m_{j}+m_{j+1}}|{\bf m}\rangle
\quad(1\le  j  \le n-1),\\
e_n|{\bf m}\rangle & = \frac{[m_n][m_n-1]}{[2]^2}
|{\bf m}-2{\bf e}_n\rangle,\\
f_n|{\bf m}\rangle &=|{\bf m}+2{\bf e}_n\rangle,\\
k_n|{\bf m}\rangle &=-q^{-2m_n-1}|{\bf m}\rangle.
\end{align*}
\end{proposition}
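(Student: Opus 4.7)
The plan is to verify the defining relations of $U_q(A^{(2)}_{2n})$ directly on the basis $\{|{\bf m}\rangle\}$ of $\hat F^{\otimes n}$, and then to establish irreducibility via a weight argument combined with an explicit connectivity check.

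First, the Cartan-type relations $k_ie_jk_i^{-1}=q_i^{a_{ij}}e_j$ and $k_if_jk_i^{-1}=q_i^{-a_{ij}}f_j$ are immediate because $k_i$ acts as a scalar on each $|{\bf m}\rangle$ and each $e_j,f_j$ shifts ${\bf m}$ by a fixed lattice vector; one only needs to match the exponent of $q$ in the ratio of eigenvalues. For $[e_i,f_j]=\delta_{ij}(k_i-k_i^{-1})/(q_i-q_i^{-1})$, the case $i\ne j$ is automatic: either the two operators act on disjoint components of $\hat F^{\otimes n}$, or (when $|i-j|=1$) a short bookkeeping shows $e_if_j|{\bf m}\rangle$ and $f_je_i|{\bf m}\rangle$ coincide. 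The diagonal cases reduce to small $q$-integer identities, one for each $i$. The interior identities for $1\le i\le n-1$ are the standard $q$-oscillator computation; the new inputs are at the boundary vertices, and in particular the key identity at $i=n$ is $[m+2][m+1]-[m][m-1]=[2][2m+1]$, which follows from the familiar tensor-product rule $[a][b]=[a+b-1]+[a+b-3]+\cdots+[|a-b|+1]$.

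The Serre relations are the main obstacle. For pairs $(i,j)$ with $a_{ij}=0$ they reduce to commutation, which holds because the operators then act on non-overlapping components. The nontrivial cases are concentrated at the two twisted ends of the Dynkin diagram: at $(0,1)$ where $a_{01}=-2$ and $a_{10}=-1$, and at $(n-1,n)$ where $a_{n-1,n}=-2$ and $a_{n,n-1}=-1$. The alternating cubic and quadratic sums $\sum_\nu(-1)^\nu e_i^{(1-a_{ij}-\nu)}e_je_i^{(\nu)}$ are to be evaluated on a basis vector, each monomial contributing a single shifted $|{\bf m}'\rangle$ weighted by a product of $q$-integers and powers of $q$; the required cancellation then reduces to a finite $q$-binomial identity of the kind underlying the standard $A^{(1)}_1$ Serre relation at base $q_i$. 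I expect the check at $(n-1,n)$ to be the most involved, because $e_n$ and $f_n$ shift $m_n$ by $2$ with a non-monomial coefficient, and their interaction with the $q$-oscillator operators $e_{n-1},f_{n-1}$ must still telescope to produce the desired cancellation.

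For irreducibility, note that each $|{\bf m}\rangle$ is a simultaneous eigenvector of the $k_i$ and that these eigenvalues determine ${\bf m}$: $k_0$ reads off $m_1$, then $k_j$ for $1\le j\le n-1$ yields $m_{j+1}$ inductively, and $k_n$ provides a consistency check for $m_n$. Hence every weight space is one-dimensional, so any submodule $V$ containing a nonzero vector must contain some $|{\bf m}\rangle$. Starting from $|{\bf m}\rangle\ne|{\bf 0}\rangle$ one reaches $|{\bf 0}\rangle$ by iterating $f_{n-1},f_{n-2},\ldots,f_1$ to move all the mass leftward one position at a time, followed by iterated $f_0$ to annihilate at site $1$; the coefficients encountered are products of $[m_j]$ which are nonzero for generic $q$. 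Conversely, from $|{\bf 0}\rangle$ one creates a particle at site $1$ via $e_0$ and moves it rightward via $e_1,\ldots,e_{n-1}$, iterating to build up any desired ${\bf m}$. Thus $V=\hat F^{\otimes n}$, which completes the proof.
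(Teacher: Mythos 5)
The paper itself offers no proof of Proposition \ref{pr:repA}; it is imported from \cite{KO3}. Your strategy --- direct verification of the defining relations on the monomial basis, then irreducibility via one-dimensional joint $k_i$-eigenspaces plus connectivity --- is the natural (and essentially the only) route, and the parts you actually compute are correct. In particular $[m+2][m+1]-[m][m-1]=[2][2m+1]$ is exactly what is needed at vertex $n$, since $(k_n-k_n^{-1})/(q_n-q_n^{-1})=(q^{2m_n+1}-q^{-2m_n-1})/(q^2-q^{-2})=[2m_n+1]/[2]$; at vertex $0$ the factor $\kappa=\frac{q+1}{q-1}$ is precisely what converts $i([m_1]-[m_1+1])$ into $(k_0-k_0^{-1})/(q^{1/2}-q^{-1/2})$. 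The irreducibility argument is complete as stated: the $k_i$-eigenvalues determine ${\bf m}$, so every submodule contains some $|{\bf m}\rangle$, and the raising/lowering coefficients along your paths are nonzero $q$-integers (or $x^{\pm1}$) for generic $q$.

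The one genuine shortfall is that the Serre relations --- the only relations that are not a two-line computation --- are asserted rather than verified. They do hold and reduce to exactly the kind of identity you anticipate: for a pair with $a_{ij}=-2$ every monomial $e_i^{(3-\nu)}e_je_i^{(\nu)}$ sends $|{\bf m}\rangle$ to the same basis vector, and the total coefficient is an alternating sum $\sum_{\nu=0}^{3}(-1)^\nu\tfrac{[3]_{q_i}!}{[\nu]_{q_i}![3-\nu]_{q_i}!}\,z^\nu=(1-q_i^{2}z)(1-z)(1-q_i^{-2}z)$ evaluated at $z\in\{q_i^{2},1,q_i^{-2}\}$ (coming from expanding $[m+\nu]$ or $[m+\nu][m+\nu-1]$ into powers of $q^{\pm\nu}$), each of which kills one factor; the $a_{ij}=-1$ case at $(n,n-1)$ collapses to $[m+1]+[m-3]=(q^{2}+q^{-2})[m-1]$. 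A complete proof must display these cancellations. Also note that your statement $a_{01}=-2$ holds only for $n\ge2$; for $n=1$ the algebra is $A^{(2)}_{2}$ with $q_1=q^{2}$, so $a_{01}=-4$ and the Serre relation at $(0,1)$ has five terms (it still follows from the vanishing of $\prod_{k=0}^{4}(1-q_0^{4-2k}z)$ at $z=q^{\pm2},1$), so this case needs to be treated separately.
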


\begin{proposition}\label{pr:repC}
The following defines an irreducible $U_q(C^{(1)}_{n})$ module structure 
on $(F^{\otimes n})_+$ and $(F^{\otimes n})_-$ defined in (\ref{fpm}).
\begin{align*}
e_0|{\bf m}\rangle &= x|{\bf m}+2{\bf e}_1\rangle,\\
f_0|{\bf m}\rangle  &= \frac{[m_1][m_1-1]}{[2]^2}
 x^{-1}|{\bf m}-2{\bf e}_1\rangle,\\
k_0|{\bf m}\rangle  &= -q^{2m_1+1}|{\bf m}\rangle,\\
e_j|{\bf m}\rangle &= [m_j]|{\bf m}-{\bf e}_j+{\bf e}_{j+1}\rangle
\quad(1\le  j  \le n-1),\\
f_j|{\bf m}\rangle &= [m_{j+1}]|{\bf m}+{\bf e}_j-{\bf e}_{j+1}\rangle
\quad(1\le  j  \le n-1),\\
k_j|{\bf m}\rangle &= q^{-m_{j}+m_{j+1}}|{\bf m}\rangle
\quad(1\le  j  \le n-1),\\
e_n|{\bf m}\rangle & = \frac{[m_n][m_n-1]}{[2]^2}
|{\bf m}-2{\bf e}_n\rangle,\\
f_n|{\bf m}\rangle &=|{\bf m}+2{\bf e}_n\rangle,\\
k_n|{\bf m}\rangle &= -q^{-2m_n-1}|{\bf m}\rangle.
\end{align*}
\end{proposition}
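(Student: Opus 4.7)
The proposition has two independent parts: that the given formulas define a $U_q(C^{(1)}_n)$-module structure, and that this module is irreducible on each parity sector $(F^{\otimes n})_\pm$. My plan is to treat them separately.

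For the module-structure part, the Cartan relations $[k_i,k_j]=0$ and $k_i e_j k_i^{\pm 1}=q_i^{\pm a_{ij}}e_j$ (together with the analogues for $f_j$) follow by inspection: one reads off the shift vectors produced by $e_j,f_j$ and matches exponents against the Cartan matrix and the data $q_0=q_n=q^2$, $q_1=\cdots=q_{n-1}=q$. The commutators $[e_i,f_j]=\delta_{ij}(k_i-k_i^{-1})/(q_i-q_i^{-1})$ split by vertex: the off-diagonal cases vanish since the shifts act on disjoint sites, the diagonal cases at $1\le i\le n-1$ reduce to the standard $[m][m+1]-[m-1][m]=[2m]$, and the cases $i=0,n$ to a $q^2$-analogue involving $[m][m-1]/[2]^2$. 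These are routine.

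The main obstacle is the quantum Serre relations at the two affine ends, i.e.\ for the index pairs $(0,1)$ and $(n-1,n)$ with $1-a_{ij}\in\{2,3\}$, where the boundary generators are of $q$-oscillator rather than $sl_n$ type. The cleanest strategy, following \cite{KOS}, is to introduce $q$-oscillator operators $\mathbf{a}^\pm_j,\mathbf{k}_j$ on the $j$-th factor of $F^{\otimes n}$ via $\mathbf{a}^+_j|m\rangle=|m+1\rangle$, $\mathbf{a}^-_j|m\rangle=(1-q^{2m})|m-1\rangle$, $\mathbf{k}_j|m\rangle=q^m|m\rangle$ (so that $\mathbf{a}^-\mathbf{a}^+-q^2\mathbf{a}^+\mathbf{a}^-=1-q^2$ and $\mathbf{k}\mathbf{a}^\pm=q^{\pm1}\mathbf{a}^\pm\mathbf{k}$), and then to express each generator $e_i,f_i,k_i$ as an explicit product in these operators; schematically $e_0,f_0$ are proportional to $(\mathbf{a}^\pm_1)^2$ and $e_j,f_j$ (for $1\le j\le n-1$) to $\mathbf{a}^-_j\mathbf{a}^+_{j+1}$ and $\mathbf{a}^+_j\mathbf{a}^-_{j+1}$ up to $\mathbf{k}_j^{\pm 1}$ factors, with mirror expressions at vertex $n$. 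Each Serre polynomial then collapses to a universal identity in the $q$-oscillator algebra, independent of the representation, which is a finite elementary check.

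For irreducibility, first observe that each $(F^{\otimes n})_\pm$ is invariant since $e_0,f_0,e_n,f_n$ shift $|{\bf m}|$ by $\pm 2$ while the middle generators preserve $|{\bf m}|$. Next, the joint spectrum of $(k_0,k_1,\ldots,k_n)$ on $|{\bf m}\rangle$ is $(-q^{2m_1+1},q^{-m_1+m_2},\ldots,q^{-m_{n-1}+m_n},-q^{-2m_n-1})$, and at generic $q$ these tuples distinguish all ${\bf m}\in(\Z_{\ge 0})^n$; hence for any nonzero $v$ in a submodule $V\subset(F^{\otimes n})_\epsilon$, applying a suitable polynomial in the $k_i$'s projects $v$ onto a single joint eigenspace and extracts a basis vector $|{\bf m}_0\rangle\in V$. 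It remains to show $U_q|{\bf m}_0\rangle=(F^{\otimes n})_\epsilon$. The plan is: first lower $|{\bf m}_0\rangle$ to the minimal vector of its parity class by applying the $e_j$'s $(1\le j\le n-1)$ to transport all particles to site $n$ and then stripping pairs via $e_n$, reaching $|{\bf 0}\rangle$ in the even sector or, after $f_{n-1}\cdots f_1$, $|{\bf e}_1\rangle$ in the odd sector; then raise this seed to an arbitrary $|{\bf m}\rangle$ of the same parity by depositing pairs at site $n$ with $f_n$ and redistributing them leftward with the $f_j$'s. No coefficient in this raising chain vanishes at generic $q$, so every $|{\bf m}\rangle$ with $(-1)^{|{\bf m}|}=\epsilon$ lies in $V$, giving $V=(F^{\otimes n})_\epsilon$.
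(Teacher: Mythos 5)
Your proposal is correct, but note that this paper does not itself supply a proof of Proposition \ref{pr:repC}: the representation is imported from \cite{KO3}, and the only hint given here is that it factors through a homomorphism from $U_q$ to the $n$-fold tensor product of the $q$-oscillator algebra \cite{KOS}. Your verification of the defining relations, with the Serre relations at the vertices $0$ and $n$ reduced to representation-independent identities in the $q$-oscillator algebra, is precisely that route; the alternative implicit in \cite{KO3} is a direct check on the basis $|{\bf m}\rangle$, which is equally finite since each Serre relation involves only two adjacent tensor factors. One small imprecision: for adjacent indices $i\ne j$ the operators $e_i$ and $f_j$ do \emph{not} act on disjoint sites (e.g.\ $e_1$ and $f_0$ both touch site $1$), so $[e_i,f_j]=0$ in those cases is a genuine, though easy, computation rather than a triviality; it works because the coefficient of each operator is insensitive to the shift produced by the other. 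Your irreducibility argument --- separate the joint $k_i$-spectrum to extract a single $|{\bf m}_0\rangle$ from any nonzero submodule, lower it to the seed $|{\bf 0}\rangle$ or $|{\bf e}_1\rangle$ using the $e_j$'s and $e_n$, then rebuild an arbitrary $|{\bf m}\rangle$ of the same parity with $f_n$ and the $f_j$'s, all coefficients being nonzero at generic $q$ --- is complete and correct. It is the single-module analogue of, and is considerably simpler than, the tensor-product irreducibility proof the paper does spell out in Section \ref{sec:4}, which must work harder because the weight spaces of $V_x\otimes V_y$ are no longer one-dimensional and the argument has to pass through the distinguished vectors $v_l^{\epsilon}$ of \cite{KO3} and a chain of auxiliary lemmas.
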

We call these irreducible representations 
the {\em $q$-oscillator representations} of $U_q$.
For the twisted case 
$U_q(D^{(2)}_{n+1})$ and $U_q(A^{(2)}_{2n})$,
they are {\em singular} at $q=1$ because of the factor 
$\kappa$.

\subsection{Quantum $R$ matrices}\label{ss:qR}
Let $V= \hat{F}^{\otimes n}$ for 
$U_q(D^{(2)}_{n+1}),  U_q(A^{(2)}_{2n})$ and 
$V= F^{\otimes n}$ for $U_q(C^{(1)}_n)$.
First we consider $U_q(D^{(2)}_{n+1})$ and $U_q(A^{(2)}_{2n})$.
Let $V_x = {\hat F}^{\otimes n}[x,x^{-1}]$ 
be the representation space of $U_q$ 
in Propositions \ref{pr:repD} and \ref{pr:repA}.  
By the existence of the universal $R$ matrix \cite{D}
there exists an element  
$R \in \mathrm{End}(V_x\otimes V_y)$ 
such that 
\begin{align}\label{drrd}
\Delta'(g) R = R \Delta(g)\quad 
\forall g \in U_q
\end{align}
up to an overall scalar.
Here $\Delta'$ is the opposite coproduct 
defined by $\Delta' = P \circ \Delta$, where
$P(u \otimes v) = v \otimes u$ is the exchange of the components.
A little inspection of our representations shows that 
$R$ depends on $x$ and $y$ only through the ratio $z=x/y$.
Moreover $V_x\ot V_y$ is irreducible 
(\cite[Prop. 12]{KO3} and Sec. \ref{ss:irr} of this paper)
hence $R$ is determined only by postulating (\ref{drrd}) 
for $g=k_r,e_r$ and $f_r$ with $0 \le r \le n$.
Thus denoting the $R$ by $R(z)$, 
we may claim \cite{Ji} that it is determined by the conditions 
\begin{align}
(k_r \otimes k_r)R(z) &= R(z)(k_r\otimes k_r),\label{kR}\\
(e_r\otimes1 + k_r \otimes e_r) R(z) &= 
R(z)(1\otimes e_r + e_r\otimes k_r),\label{eR}\\
(1\otimes f_r + f_r \otimes k^{-1}_r) R(z) &= 
R(z)(f_r\otimes 1 + k^{-1}_r\otimes f_r)\label{fR}
\end{align}
for $0 \le r \le n$ up to an overall scalar.
We fix the normalization of $R(z)$ by 
\begin{align}\label{rnor}
R(z)(|{\bf 0}\rangle \otimes |{\bf 0}\rangle)
= |{\bf 0}\rangle \otimes |{\bf 0}\rangle,
\end{align}
where $|{\bf 0}\rangle \in {\hat F}^{\otimes n}$ is defined 
in the beginning of Section \ref{ss:qor} with 
${\bf 0}= (0,\ldots, 0)$.
We call the intertwiner $R(z)$ 
the {\em quantum $R$ matrix} for $q$-oscillator representation.
It satisfies the Yang-Baxter equation
\begin{align}\label{yber}
R_{12}(x)R_{13}(xy)R_{23}(y)
= R_{23}(y)R_{13}(xy)R_{12}(x).
\end{align}

Next we consider $U_q(C^{(1)}_n)$.
Denote by $V^{\pm}_x= (F^{\otimes n})_{\pm}[x,x^{-1}]$
the representation spaces in Proposition \ref{pr:repC}
and set 
$V_x = V^+_x \oplus V^-_x = F^{\otimes n}[x,x^{-1}]$.
See (\ref{fpm}) for the definition of $(F^{\otimes n})_{\pm}$.
We define the quantum $R$ matrix
$R(z)$ to be the direct sum 
\begin{align}\label{Rdeco}
R(z) = R^{+,+}(z)\oplus 
R^{+,-}(z)\oplus 
R^{-,+}(z)\oplus 
R^{-,-}(z),
\end{align}
where each 
$R^{\epsilon_1,\epsilon_2}(z)\in 
\mathrm{End}(V^{\epsilon_1}_x\otimes V^{\epsilon_2}_y)$ is
the quantum $R$ matrix with the 
normalization condition 
\begin{equation}\label{Rpmnor}
\begin{split}
&R^{+,+}(z) (|{\bf 0}\rangle \otimes |{\bf 0}\rangle)
=|{\bf 0}\rangle \otimes |{\bf 0}\rangle,\quad
\qquad\quad
R^{+,-}(z)(|{\bf 0}\rangle \otimes |{\bf e}_1\rangle)
=\frac{-iq^{1/2}}{1-z}|{\bf 0}\rangle \otimes |{\bf e}_1\rangle,\\
&R^{-,+}(z)(|{\bf e}_1\rangle \otimes |{\bf 0}\rangle)=
\frac{-iq^{1/2}}{1-z}|{\bf e}_1\rangle \otimes |{\bf 0}\rangle,\quad
R^{-,-}(z)(|{\bf e}_1\rangle \otimes |{\bf e}_1\rangle)
=\frac{z-q^2}{1-zq^2}
|{\bf e}_1\rangle \otimes |{\bf e}_1\rangle.
\end{split}
\end{equation}
The $R$ matrix $R(z)$
satisfies the Yang-Baxter equation (\ref{yber}).
In fact it is decomposed into the finer equalities 
($\epsilon_1, \epsilon_2, \epsilon_3=\pm$)
\begin{align*}
R^{\epsilon_1,\epsilon_2}_{12}(x)
R^{\epsilon_1,\epsilon_3}_{13}(xy)
R^{\epsilon_2,\epsilon_3}_{23}(y)
= 
R^{\epsilon_2,\epsilon_3}_{23}(y)
R^{\epsilon_1,\epsilon_3}_{13}(xy)
R^{\epsilon_1,\epsilon_2}_{12}(x).
\end{align*}

\subsection{Main theorem}\label{ss:main}
Define the operator $K$ acting on ${\hat F}^{\otimes n}$ by 
$K|{\bf m}\rangle = (-iq^{\frac{1}{2}})^{m_1+\cdots + m_n}|{\bf m}\rangle$.
Introduce the gauge transformed quantum $R$ matrix by
\begin{align}\label{gtr}
{\tilde R}(z) = (K^{-1}\otimes 1) R(z)(1\otimes K).
\end{align}
It is easy to see that ${\tilde R}(z)$ 
also satisfies the Yang-Baxter equation (\ref{yber}).

In Section \ref{ss:ce} the solutions $S^{s,t}(z)$  of 
the Yang-Baxter equation have been 
constructed from the 3d $R$  in 
(\ref{sact}), (\ref{sabij}) and (\ref{rst}).
In Section \ref{ss:qR}
the quantum $R$ matrices for $q$-oscillator representations
of $U_q(D^{(2)}_{n+1})$, 
$U_q(A^{(2)}_{2n})$ and 
$U_q(C^{(1)}_{n})$ 
have been defined. 
The next theorem, which is the main result of \cite{KO3}, 
states the precise relation between them.
(See (\ref{s21}) for $S^{2,1}(z)$.)
\begin{theorem}\label{th:main}
Denote by ${\tilde R}_{\mathfrak g}(z)$ the gauge transformed
quantum $R$ matrix (\ref{gtr}) for 
$U_q({\mathfrak g})$.
Then the following equalities hold:
\begin{align*}
S^{1,1}(z) = {\tilde R}_{D^{(2)}_{n+1}}(z),\quad
S^{1,2}(z) = {\tilde R}_{A^{(2)}_{2n}}(z),\quad
S^{2,2}(z) = {\tilde R}_{C^{(1)}_{n}}(z),
\end{align*}
where the last one means
$S^{\epsilon_1,\epsilon_2}(z) = \tilde{R}^{\epsilon_1,\epsilon_2}(z)$
between (\ref{22pm}) and (\ref{Rdeco}) with the 
gauge transformation (\ref{gtr}).
\end{theorem}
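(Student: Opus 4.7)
The strategy I would follow is to use the characterization of the quantum $R$ matrix as the unique (up to scalar) solution of the intertwining relations (\ref{kR})--(\ref{fR}) for all Chevalley generators, combined with the irreducibility of $V_x\otimes V_y$ (cited from \cite[Prop. 12]{KO3} and Section \ref{sec:4}). Given this, it suffices to verify two things: (i) that $S^{1,1}(z)$, $S^{1,2}(z)$ and $S^{\epsilon_1,\epsilon_2}(z)$, each conjugated back by the gauge $K$ via (\ref{gtr}), intertwines $\Delta(g)$ with $\Delta'(g)$ for every Chevalley generator $g\in\{k_r,e_r,f_r\}$, $0\le r\le n$, of the corresponding $U_q$, and (ii) that the normalizations match, i.e.\ (\ref{rnor}) for the twisted cases and (\ref{Rpmnor}) for the four blocks of $C^{(1)}_n$.

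For the interior generators $1\le r\le n-1$, the action of $k_r,e_r,f_r$ in Propositions \ref{pr:repD}--\ref{pr:repC} is identical across all three algebras and only touches $m_r,m_{r+1}$. Consequently the corresponding intertwining relation reduces to a local identity involving just two consecutive 3d $R$ blocks, namely $R^{a_r,b_r,c_{r-1}}_{i_r,j_r,c_r}R^{a_{r+1},b_{r+1},c_r}_{i_{r+1},j_{r+1},c_{r+1}}$ in (\ref{sabij}), summed over the intermediate index $c_r$ on the auxiliary wire. This identity is a local intertwining property of the 3d $R$ itself, independent of the boundary vectors, and it holds uniformly for the three series. It is essentially a matrix-element restatement of the fact that the 3d $R$ intertwines two copies of a $q$-oscillator-type coaction on the third space, which is the content of the recent \cite{KOS} observation recalled in the introduction.

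The substantive point is the boundary case $r=0$ and $r=n$. Here only $m_1$ (resp.\ $m_n$) is touched, and the intertwining relation collapses to an identity between a single 3d $R$ block and the boundary covector $\langle\chi_s(z)|$ (resp.\ $|\chi_t(1)\rangle$). The three different affine algebras arise precisely from the two choices $s,t\in\{1,2\}$ in (\ref{vb}). Concretely, contracting $R^{a_1,b_1,sc_0}_{i_1,j_1,c_1}$ against $\langle\chi_s(z)|$ via (\ref{Rex}) produces $-iq^{m_1+1/2}$ matching the $k_0$ of $D^{(2)}_{n+1}/A^{(2)}_{2n}$ when $s=1$, and produces $-q^{2m_1+1}$ matching the $k_0$ of $C^{(1)}_n$ when $s=2$; analogous statements hold at the $r=n$ end with $|\chi_t(1)\rangle$, and together they account for the three pairings $(s,t)=(1,1),(1,2),(2,2)$ in the theorem. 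The gauge factor $K$ of (\ref{gtr}) is inserted exactly to absorb the surplus $(-iq^{1/2})^{|{\bf m}|}$ produced by this matching. Technically, the verification reduces to a short list of explicit $q$-series identities obtained by combining (\ref{Rex}) with (\ref{vb}); these are the core computations of \cite{KO3}.

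The main obstacle is this boundary identity. One must confirm that the exact coefficients of Propositions \ref{pr:repD}--\ref{pr:repC} — including $\kappa=(q+1)/(q-1)$, the half-integer powers of $q$, and the factors $[m_n][m_n-1]/[2]^2$ appearing at the $q^2$-nodes — emerge from summing (\ref{Rex}) against $|\chi_s\rangle$ with the correct signs and $q$-phases. This bookkeeping is exactly what $K$ is designed to handle, but it is the only place where a nontrivial combinatorial $q$-identity is really needed. Once the boundary identities are in place, the normalization check is short: evaluating (\ref{sabij}) on $|{\bf 0}\rangle\otimes|{\bf 0}\rangle$ collapses the sum to a geometric series in $c_0$ whose sum is cancelled by $\varrho^{s,t}(z)$ chosen in (\ref{rst}), matching (\ref{rnor}); in the $C^{(1)}_n$ case the same computation applied to each of the four low-weight vectors yields the four normalizations of (\ref{Rpmnor}).
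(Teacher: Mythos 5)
Your outline follows essentially the same route as the paper's (the theorem is quoted here from \cite{KO3} without being reproved, and the cited proof proceeds exactly as you describe): uniqueness of the intertwiner up to scalar from the irreducibility of $V_x\otimes V_y$ established in Section \ref{sec:4}, verification of (\ref{kR})--(\ref{fR}) generator by generator --- the $k_r$ relations from the conservation law (\ref{claw}), the bulk $e_r,f_r$ relations from local three-term identities of the 3d $R$, and the $r=0,n$ relations from identities involving the boundary vectors (\ref{vb}) --- followed by the normalization check in which the $c_0$-sum is a $q$-binomial series cancelled by $\varrho^{s,t}(z)$. The decisive boundary and bulk $q$-identities are asserted rather than carried out in your sketch, but that is precisely where the computations of \cite{KO3} live, so the strategy coincides with the paper's own.
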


\begin{remark}\label{re:dyn}
Theorem \ref{th:main} suggests
the following correspondence between the boundary vectors 
(\ref{vb}) with the end shape of the Dynkin diagrams:

\begin{picture}(200,70)(-42,3)

\put(100,50){
\put(-0.4,0){\circle{6}}
\drawline(2,-1.8)(18,-2)
\drawline(2,1.8)(18,2)
\drawline(7,0)(13,-6)
\drawline(7,0)(13,6)
\put(-3,10){\small $0$}
\put(-50,-2){$\langle {\chi}_1(z) |$}
}

\put(100,18){
\put(-0.4,0){\circle{6}}
\drawline(2,-2)(18,-1.8)
\drawline(2,2)(18,1.8)
\drawline(7,6)(13,0)
\drawline(7,-6)(13,0)
\put(-3,10){\small $0$}
\put(-50,-2){$\langle {\chi}_2(z) |$}
}

\put(90,50){
\drawline(82,-2)(98,-1.8)
\drawline(82,2)(98,1.8)
\drawline(93,0)(87,-6)
\drawline(93,0)(87,6)
\put(100.4,0){\circle{6}}
\put(120,-2){$|\chi_1(1)\rangle$}
\put(98,10){\small $n$}
}

\put(90,18){
\drawline(82,-1.8)(98,-2)
\drawline(82,1.8)(98,2)
\drawline(93,6)(87,0)
\drawline(93,-6)(87,0)
\put(100.4,0){\circle{6}}
\put(120,-2){$|\chi_2(1)\rangle$}
\put(98,10){\small $n$}
}

\end{picture}

Consistently with Remark \ref{re:dyn},  
$S^{2,1}(z)$, which is reducible to $S^{1,2}(z^{1/2}) $ by (\ref{s21}),
is identified \cite{KOS} with the quantum $R$ matrix for $q$-oscillator 
representation of another $U_q(A^{(2)}_{2n})$ realized as  
the affinization of the classical part $U_q(B_n)$.
(Proposition \ref{pr:repA} corresponds to 
taking the classical part to be $U_q(C_n)$.)
As far as $\langle {\chi}_1(z) |$ and $|\chi_1(1)\rangle$ are 
concerned, the above correspondence agrees 
with the observation made in 
\cite[Remark 7.2]{KS} on the similar result concerning a 
3d $L$ operator.
With regard to $\langle {\chi}_2(z) |$ and $|\chi_2(1)\rangle$,
the relevant affine Lie algebras $A^{(2)}_{2n}$ and 
$C^{(1)}_n$ in this paper are the subalgebras 
of $B^{(1)}_{n+1}$ and $D^{(1)}_{n+2}$
in \cite[Theorem 7.1]{KS}
obtained by folding their Dynkin diagrams. 
\end{remark}

\section{Proof of the irreducibility of the tensor product}\label{sec:4}
In \cite{KO3} we gave a proof of the following proposition.
\begin{proposition}[Prop. 12 of \cite{KO3}] \label{pr:irred}
As a $U_q(D_{n+1}^{(2)})$ or $U_q(A_{2n}^{(2)})$ module $V_x\ot V_y$ 
is irreducible. As a $U_q(C_n^{(1)})$ module each $V_x^{\epsilon_1}\ot
V_y^{\epsilon_2}$ $(\epsilon_1,\epsilon_2=\pm)$ is irreducible.
\end{proposition}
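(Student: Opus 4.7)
The plan is to show that any nonzero $U_q$-submodule $W\subseteq V_x\ot V_y$ (or $W\subseteq V^{\epsilon_1}_x\ot V^{\epsilon_2}_y$ in the $C^{(1)}_n$ case) coincides with the ambient space, via three ingredients: weight decomposition, isolation of a pure tensor by spectral-parameter separation, and cyclicity of a vacuum vector.

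First, I would use the interior Cartan elements $k_1,\ldots,k_{n-1}$: since $k_j$ acts on $|{\bf m}\rangle$ by $q^{-m_j+m_{j+1}}$, each joint eigenspace in $V_x\ot V_y$ is finite dimensional and spanned by the pure tensors $|{\bf a}\rangle\ot|{\bf b}\rangle$ with a fixed sum ${\bf a}+{\bf b}={\bf N}$. Choose a nonzero $v=\sum c_{{\bf a},{\bf b}}|{\bf a}\rangle\ot|{\bf b}\rangle \in W$ inside one such weight space. Now observe that the affine generators $e_0,f_0$ are the only ones carrying the spectral parameters ($x$ on the first factor, $y$ on the second), while $e_n,f_n$ and the interior generators are spectral-parameter free. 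Through $\Delta(e_0)=1\ot e_0+e_0\ot k_0$ and its $f_0$ counterpart, any weight-preserving composite built from $\Delta(e_0),\Delta(f_0)$ and the interior generators, applied to $v$, multiplies each coefficient $c_{{\bf a},{\bf b}}$ by a Laurent monomial in $z=x/y$ whose exponent records how many times each tensor factor has been acted on. Because $z$ is generic, a Vandermonde argument on the distinct $z$-exponents across different splits $({\bf a},{\bf b})$ isolates the individual pure tensors whose coefficients in $v$ are nonzero and places them in $W$.

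Once $W$ contains a pure tensor, I descend to a vacuum vector $|{\bf 0}\rangle\ot|{\bf 0}\rangle$ (and respectively $|{\bf 0}\rangle\ot|{\bf 0}\rangle,|{\bf 0}\rangle\ot|{\bf e}_1\rangle,|{\bf e}_1\rangle\ot|{\bf 0}\rangle,|{\bf e}_1\rangle\ot|{\bf e}_1\rangle$ for the four parity sectors of $C^{(1)}_n$) by repeated application of $f_0,e_n$ — which lower $m_1,m_n$ on each factor — interleaved with the interior $e_j,f_j$, which shuffle particles among the positions. These operators act transitively on the basis of nonzero vectors of each factor, a classical fact for the interior type $A_{n-1}$ subalgebra extended by the affine nodes at both ends. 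Conversely, from the vacuum, applying $e_0,f_n$ together with the interior generators — and separating the tensor factors by the same spectral-parameter argument as above — reaches every pure tensor, proving that $W$ exhausts the ambient space.

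The hard part is the isolation step. Because $\Delta(e_0)$ inherently mixes the two tensor factors, decoupling them requires a careful tracking of the $z$-exponents produced by each composite, together with the intervening $q$-integer prefactors coming from the action of the interior generators on $|{\bf m}\rangle$, to ensure that different splits $({\bf a},{\bf b})$ are assigned linearly independent Laurent polynomials in $z$. The nontriviality of the resulting Vandermonde system relies on both $z$ and $q$ being generic; the latter rules out accidental cancellations among the $q$-integer coefficients appearing in Propositions~\ref{pr:repD}--\ref{pr:repC}, and bookkeeping these coefficients is where the bulk of the technical work lies. The parity constraint in the $C^{(1)}_n$ case is automatically respected since $e_0,f_0,e_n,f_n$ all change $|{\bf a}|$ and $|{\bf b}|$ by $2$, so each sector $(F^{\ot n})_{\epsilon_1}\ot (F^{\ot n})_{\epsilon_2}$ is treated independently by the same argument.
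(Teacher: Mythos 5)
Your high-level strategy (show every nonzero submodule contains a distinguished vector, then show that vector is cyclic) is the right one and is, in outline, what the paper does; but both of your key steps contain genuine gaps. The central problem is the ``isolation'' step. You assert that any weight-preserving composite built from $\Delta(e_0)$, $\Delta(f_0)$ and the interior generators ``multiplies each coefficient $c_{{\bf a},{\bf b}}$ by a Laurent monomial in $z$.'' This is false: $\Delta(e_0)=1\ot e_0+e_0\ot k_0$ and $\Delta(f_0)=f_0\ot1+k_0^{-1}\ot f_0$ have cross terms, so a composite such as $\Delta(f_0)\Delta(e_0)$ is \emph{not} diagonal on the pure-tensor basis of a weight space --- it transfers quanta between the two factors, i.e.\ it has off-diagonal matrix elements (proportional to $z^{\pm1}$) connecting different splits $({\bf a},{\bf b})$. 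What is true is only that each \emph{matrix element} carries a $z$-power recording the net transfer; turning that into an invertible (``Vandermonde'') system that actually places individual pure tensors in $W$ is exactly the hard content, and you defer it rather than carry it out. (A smaller slip: the joint eigenspaces of $k_1,\dots,k_{n-1}$ alone fix ${\bf a}+{\bf b}$ only up to adding $(c,\dots,c)$ and are infinite dimensional; you need $k_0$ and $k_n$ as well to pin down ${\bf N}$.)

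The second gap is the cyclicity step. You claim that the lowering/raising operators ``act transitively on the basis \dots\ a classical fact for the interior type $A_{n-1}$ subalgebra extended by the affine nodes.'' On a single factor this is easy, but on the tensor product the coproduct again mixes the factors, and transitivity on pure tensors is precisely what must be proved. This is what the paper's Proposition~\ref{pr:w} and its two lemmas supply: explicit recursions expressing $w_{l,k}=|k{\bf e}_{n-1}\rangle\ot|(l-k){\bf e}_n\rangle$ in terms of $f_{n-1},f_n$ acting on the reference vectors $v_l$ (resp.\ $v_l^\epsilon$), combined with the $U_q(A_{n-1})$- and $U_q(sl_2)$-module structure of each weight layer $W_l$; the complementary statement that every nonzero submodule contains all the $v_l$ is the content of Lemmas~8 and~10 of \cite{KO3}, where the genericity of $z$ is used in a controlled way. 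In short, your proposal correctly identifies where genericity of $q$ and $z$ must enter, but the two assertions that would make the argument close --- diagonal action of weight-preserving composites, and transitivity on pure tensors --- are respectively false as stated and unproven, so the proof is not complete.
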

Since the explanation there was not sufficient, we give the detailed proof here.
Let $\geh=D_{n+1}^{(2)},A_{2n}^{(2)}$ or $C_n^{(1)}$, $I=\{0,1,\ldots,n\}$, and for 
a subset $J$ of $I$ let $U_q(\geh_J)$ be the subalgebra of $U_q(\geh)$ generated
by $\{e_j,f_j,k_j^{\pm1}\mid j\in J\}$. Recall the vector $v_l$ \cite[Prop. 4]{KO3}
for $\geh=D_{n+1}^{(2)}$ and $v_l^\epsilon$ \cite[Prop. 5]{KO3} for $A_{2n}^{(2)},
C_n^{(1)}$.
\begin{proposition} \label{pr:w}
For $\geh=D_{n+1}^{(2)}$
\[
V^{\ot2}=\sum_{l=0}^\infty U_q(\geh_{I\setminus\{0\}})v_l,
\]
and for $\geh=A_{2n}^{(2)},C_n^{(1)}$
\[
V^{\epsilon_1}\ot V^{\epsilon_2}=\sum_{l=0\atop(-1)^l=\epsilon_1\epsilon_2}^\infty 
U_q(\geh_{I\setminus\{0\}})v_l^{\epsilon_1}.
\]
\end{proposition}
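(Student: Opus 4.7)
Write $U_q(\geh_0):=U_q(\geh_{I\setminus\{0\}})$. My proof would proceed in three steps: bound the weights above, classify all $U_q(\geh_0)$-highest weight vectors in $V\ot V$, then sweep out the module by a standard argument.

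For the first step, Propositions~\ref{pr:repD}--\ref{pr:repC} show that within $U_q(\geh_0)$ the operator $e_j$ with $1\le j < n$ moves a ``particle'' from site $j$ to site $j{+}1$, while $e_n$ annihilates particles at the rightmost site. Iterating these brings any $|{\bf m}\rangle \in V$ to $|{\bf 0}\rangle$ (or to $|{\bf e}_n\rangle$, in the odd-parity component of $V$ for $\geh=A_{2n}^{(2)}, C_n^{(1)}$) in finitely many steps, so weights of $V$ and hence of $V\ot V$ are bounded above.

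For the second step, consider a weight vector $v = \sum c_{{\bf m},{\bf m}'}|{\bf m}\rangle \ot |{\bf m}'\rangle$ satisfying $\Delta(e_j)v = 0$ for $j=1,\ldots,n$. The conditions for $j<n$ (whose action is identical across the three algebras) split by linear independence of basis tensors into a pair of separate conditions arising from $\Delta(e_j) = 1\ot e_j + e_j\ot k_j$, forcing $c_{{\bf m},{\bf m}'} = 0$ unless $m_j = m'_j = 0$ for all $j < n$. So the support of $v$ is confined to $\{|k{\bf e}_n\rangle \ot |k'{\bf e}_n\rangle\}$. The remaining $\Delta(e_n)v = 0$ becomes a three-term recursion in $c_{k,k'}$ with a one-dimensional solution space for each total $l=k+k'$ (with a fixed parity of $k$ in the two-step cases $A_{2n}^{(2)}, C_n^{(1)}$). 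Matching to \cite[Prop.~4, 5]{KO3} identifies these solutions with the stated $v_l$ (resp.\ $v_l^\epsilon$), giving the complete list of highest weight vectors.

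For the third step, let $W$ denote the right-hand side of the proposition and suppose for contradiction that $Q:=(V\ot V)/W \ne 0$. By the first step any nonzero weight vector in $Q$ can be raised by $\Delta(e_j)$'s to a nonzero highest weight vector $\bar w\in Q$ in finitely many steps. Lifting $\bar w$ to some $w\in V\ot V$ gives $\Delta(e_j)w \in W$ for all $j$; modifying $w$ by an element of $W$ of the same weight -- possible thanks to the second step together with finite-dimensionality of each weight space of $V\ot V$ -- produces a genuine highest weight vector $w^\star\in V\ot V$ still projecting to $\bar w$. But by the second step, $w^\star$ is a scalar multiple of some $v_l$ and hence lies in $W$, contradicting $\bar w\ne 0$. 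The hardest point is the modification just described, which amounts to surjectivity of the $\mathfrak{n}^+$-invariants functor on $Q$ and is verified by a weight-by-weight dimension count using the complete classification. The parity-matched statement for $V^{\epsilon_1}\ot V^{\epsilon_2}$ is handled identically, with the extra constraint $(-1)^l = \epsilon_1\epsilon_2$ built into the indexing of $v_l^\epsilon$.
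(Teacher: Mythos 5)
Your step 2 rests on a false claim, and the rest of the argument leans on it. The condition $\Delta(e_j)v=0$ with $\Delta(e_j)=1\ot e_j+e_j\ot k_j$ does \emph{not} split into separate conditions on the two tensor factors: the two terms can land on the same basis tensor and cancel. Concretely, for $n\ge 2$ take
\[
v=|{\bf e}_1\rangle\ot|{\bf e}_2\rangle-q\,|{\bf e}_2\rangle\ot|{\bf e}_1\rangle .
\]
Using $e_1|{\bf m}\rangle=[m_1]\,|{\bf m}-{\bf e}_1+{\bf e}_2\rangle$ and $k_1|{\bf e}_2\rangle=q|{\bf e}_2\rangle$ one finds $\Delta(e_1)v=-q|{\bf e}_2\rangle\ot|{\bf e}_2\rangle+q|{\bf e}_2\rangle\ot|{\bf e}_2\rangle=0$, yet the support of $v$ violates $m_1=m_1'=0$. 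So $\ker\Delta(e_j)$ is not confined to $\{m_j=m_j'=0\}$, and your classification of the $U_q(\geh_{I\setminus\{0\}})$-singular vectors of $V\ot V$ is not established. The conclusion (that the only singular vectors are the $v_l$) may well be true, but proving it requires showing that none of the $U_q(A_{n-1})$-singular vectors of weight $j{\bf e}_{n-1}+(l-j){\bf e}_n$ with $j>0$ survives the additional condition $\Delta(e_n)v=0$ --- exactly the computation your splitting argument was meant to bypass. The dimension count in step 3 inherits this gap, and moreover the lifting step there (modifying a representative $w$ with $\Delta(e_j)w\in W$ for all $j$ into a genuine singular vector of $V\ot V$) is the crux of any such quotient argument and is only asserted; it amounts to a cohomological vanishing that is not automatic for these infinite-dimensional modules and would itself need the (missing) classification.

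By way of contrast, the paper's proof runs in the opposite direction and never classifies singular vectors: it shows by explicit recursions in $f_{n-1},f_n$ that each $w_{l,k}=|k{\bf e}_{n-1}\rangle\ot|(l-k){\bf e}_n\rangle$ lies in $\sum_j U_q(\geh_{\{n-1,n\}})v_j$, and then uses the $U_q(A_{n-1})$-module decomposition of the graded piece $W_l$ (the Pieri-type decomposition of $L(k{\bf e}_n)\ot L((l-k){\bf e}_n)$ together with the $U_q(sl_2)$ fact that $L(a)\ot L(b)$ is generated by the lowest-$\ot$-highest weight vector) to conclude $W_l=\sum_k U_q(\geh_{I\setminus\{0,n\}})w_{l,k}$. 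If you want to salvage your top-down approach, you would need to replace the splitting claim by an honest analysis of $\ker\Delta(e_1)\cap\cdots\cap\ker\Delta(e_n)$ weight by weight, and to justify the lifting step; as it stands, both points are open.
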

This is an immediate consequence of the following two lemmas. Set
$w_{l,k}=|k{\bf e}_{n-1}\rangle\ot |(l-k){\bf e}_n\rangle\in V^{\ot2}\;(l\ge0,0\le k\le l)$.

\begin{lemma}
For $\geh=D_{n+1}^{(2)}$ 
\[
w_{l,k}\in \sum_{j=0}^lU_q(\geh_{\{n-1,n\}})v_j,
\]
and for $\geh=A_{2n}^{(2)},C_n^{(1)}$
\[
w_{l,k}\in \sum_{j=0\atop j\equiv l\!\!\!\!
\pmod2}^lU_q(\geh_{\{n-1,n\}})v_j^\epsilon
\quad\mbox{where }\epsilon=(-1)^k.
\]
\end{lemma}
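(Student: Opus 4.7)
The proof will proceed by induction on $l$. The key structural observation is that every $w_{l,k}$, as well as every $v_l$ (resp.\ $v_l^\epsilon$), lies in the subspace $W\subset V^{\ot 2}$ linearly spanned by the tensors $|a\mathbf{e}_{n-1}+b\mathbf{e}_n\rangle\ot|c\mathbf{e}_{n-1}+d\mathbf{e}_n\rangle$, and $W$ is stable under $U_q(\geh_{\{n-1,n\}})$. Stability follows from the coproduct formulas together with the fact that the generators $e_{n-1},f_{n-1},e_n,f_n,k_{n-1}^{\pm1},k_n^{\pm1}$ affect only the coordinates at positions $n-1$ and $n$ of each tensor factor. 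I introduce a grading $W=\bigoplus_N W^{=N}$ by the total quantum number $N=a+b+c+d$, with associated filtration $W^{\le N}=\bigoplus_{M\le N}W^{=M}$. Under this filtration $e_{n-1},f_{n-1},k_{n-1},k_n$ preserve $N$, while $e_n,f_n$ shift $N$ by $\mp1$ for $\geh=D_{n+1}^{(2)}$ and by $\mp2$ for $\geh=A_{2n}^{(2)},C_n^{(1)}$. The base case $l=0$ is immediate, since $w_{0,0}=|\mathbf{0}\rangle\ot|\mathbf{0}\rangle$ equals $v_0$ (resp.\ $v_0^+$) up to a nonzero scalar.

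For the inductive step, I would reduce to showing that, modulo $W^{\le l-1}$, the orbit $U_q(\geh_{\{n-1,n\}})v_l$ (resp.\ $U_q(\geh_{\{n-1,n\}})v_l^\epsilon$) spans the subspace of $W^{=l}/W^{\le l-1}$ generated by $\{w_{l,k}\}$ subject to the applicable parity constraint. The lower-order correction terms then lie in $W^{\le l-1}$ and are absorbed by the inductive hypothesis, which provides expressions for each $w_{l',k'}$ with $l'<l$ in $\sum_{j\le l'}U_q(\geh_{\{n-1,n\}})v_j$ (resp.\ $\sum_{j\le l',\,j\equiv l'\!\!\pmod2}U_q(\geh_{\{n-1,n\}})v_j^\epsilon$). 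The concrete mechanism is to expand the vectors $f_{n-1}^{(a)}v_l$ for $a=0,1,\ldots,l$ (augmented by suitable $e_n$ or $f_n$ corrections to restore the parity sector in the $A^{(2)}_{2n},C^{(1)}_n$ cases) in the basis $\{w_{l,k}\}$ of the top-graded piece. Using the explicit formulas for $v_l$ and $v_l^\epsilon$ from \cite[Props.~4, 5]{KO3}, one checks that this expansion is upper-triangular with respect to an appropriate ordering of the index $k$ and has nonvanishing diagonal entries built from $q$-integer products arising through the coproduct.

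The main obstacle is precisely this last step: establishing triangularity and nonvanishing of the diagonal entries in the change-of-basis matrix from $\{f_{n-1}^{(a)}v_l\bmod W^{\le l-1}\}_{a}$ to $\{w_{l,k}\}_k$. Once the correct ordering is identified (roughly, by the concentration of quanta on position $n$ in the second tensor factor), the calculation is a direct but delicate manipulation of $q$-binomials and $\kappa$-factors from the $q$-oscillator action, and the result is invertibility of the transition matrix. In the twisted $A^{(2)}_{2n}$ and $C^{(1)}_n$ cases, the additional subtlety is arranging that the auxiliary applications of $e_n,f_n$ (which change $N$ by $\pm 2$) are compensated so that the generated vectors both stay in the proper parity sector and remain within $W^{\le l}$.
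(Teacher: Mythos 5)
There is a genuine gap in the central computational step. You propose to expand the vectors $f_{n-1}^{(a)}v_l$ ($a=0,\dots,l$) in ``the basis $\{w_{l,k}\}$ of the top-graded piece'' and to verify triangularity of the transition matrix; but $\{w_{l,k}\}_{0\le k\le l}$ is not a basis of the degree-$l$ piece of your space $W$ (that piece has dimension $\binom{l+3}{3}$, not $l+1$), and the vectors $f_{n-1}^{(a)}v_l$ do not lie in $\mathrm{span}\{w_{l,k}\}_k$. Concretely, $v_l$ is supported on $|k{\bf e}_n\rangle\ot|(l-k){\bf e}_n\rangle$, and since $\Delta(f_{n-1})=f_{n-1}\ot1+k_{n-1}^{-1}\ot f_{n-1}$ with $f_{n-1}|{\bf m}\rangle=[m_n]|{\bf m}+{\bf e}_{n-1}-{\bf e}_n\rangle$, applying $f_{n-1}$ produces components such as $|k{\bf e}_n\rangle\ot|{\bf e}_{n-1}+(l-k-1){\bf e}_n\rangle$, which carry a quantum at position $n-1$ in the \emph{second} factor and are therefore none of the $w_{l,k}$. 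Your filtration cannot absorb these terms either: $f_{n-1}$ preserves the total occupation number $N$, so ``modulo $W^{\le l-1}$'' removes nothing, and the claimed ``lower-order corrections handled by induction'' are vacuous at exactly the point where the unwanted components appear. A further structural problem is that $U_q(\geh_{\{n-1,n\}})$ does not preserve $W^{\le l}$ ($f_n$ raises $N$), so $W^{=l}/W^{\le l-1}$ is not a quotient on which the induction can be organized as stated.

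The paper's proof avoids this by inducting on $k$ rather than $l$ and by using specifically designed elements of $U_q(\geh_{\{n-1,n\}})$ rather than powers of $f_{n-1}$ alone: the seed $k=0$ is obtained from the linear independence of $\{v_l,f_nv_{l-1},\dots,f_n^lv_0\}$ inside $\mathrm{span}\{|k{\bf e}_n\rangle\ot|(l-k){\bf e}_n\rangle\}$ (note this already requires all $v_j$ with $j\le l$, not just $v_l$), and then explicit recursions such as $w_{l,k}=(f_{n-1}f_n-q^{-1}f_nf_{n-1})w_{l-1,k-1}+i\frac{q^{l-k+1/2}}{[l-k+1]}f_{n-1}w_{l,k-1}$ (with quadratic analogues in the $A^{(2)}_{2n}$, $C^{(1)}_n$ cases) are used; the quantum commutator is tuned precisely so that the unwanted components with a quantum at position $n-1$ in the wrong tensor factor cancel. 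To repair your argument you would need to replace $f_{n-1}^{(a)}$ by such commutator combinations, i.e., essentially reconstruct the paper's recursion, and in addition supply the $\{f_n^jv_{l-j}\}$ independence argument for the seed.
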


\begin{proof}
We treat the $\geh=D_{n+1}^{(2)}$ case first. Note that the set of vectors 
$B=\{v_l,f_nv_{l-1},\ldots,f_n^lv_0\}$ is linearly independent in the vector 
subspace spanned by $\{|k{\bf e}_n\rangle\ot|(l-k){\bf e}_n\rangle\mid 0\le k\le l\}$.
Hence, $B$ is also a basis and 
\begin{equation} \label{eq1}
w_{l,0}=|{\bf 0}\rangle\ot|l{\bf e}_n\rangle\in\sum_{j=0}^lU_q(\geh_{\{n\}})v_j.
\end{equation}
Next note that 
\[
w_{l,k}=(f_{n-1}f_n-q^{-1}f_nf_{n-1})w_{l-1,k-1}+i\frac{q^{l-k+1/2}}{[l-k+1]}f_{n-1}w_{l,k-1}
\quad(1\le k\le l).
\]
This relation together with \eqref{eq1} shows the result. 

Suppose now $\geh=A_{2n}^{(2)},C_n^{(1)}$. We compare 
$B'=\{v_l^\epsilon,f_nv_{l-1}^\epsilon,\ldots,f_n^{l'}v_{l-2l'}^\epsilon\}$
 ($l'=\lfloor l/2\rfloor\,(\epsilon=+),\,=\lfloor(l-1)2\rfloor\,(\epsilon=-)$) 
and the subspace spanned by
$\{|k{\bf e}_n\rangle\ot|(l-k){\bf e}_n\rangle\mid 0\le k\le l,(-1)^k=\epsilon\}$.
We have 
\begin{align*}
w_{l,0}=|{\bf 0}\rangle\ot|l{\bf e}_n\rangle&\in\sum_{j=0\atop j\equiv l\!\!\!\!
\pmod2}^lU_q(\geh_{\{n\}})v_j^+, \\
|{\bf e}_n\rangle\ot|(l-1){\bf e}_n\rangle&\in\sum_{j=0\atop j\equiv l\!\!\!\!
\pmod2}^lU_q(\geh_{\{n\}})v_j^-\quad(l\ge1).
\end{align*}
From
\begin{align*}
w_{1,1}&=f_{n-1}v_1^-,\quad w_{2,1}=\frac{q^{-1}}{[2]}v_0^-+\frac1{[2]^2}f_{n-1}v_2^-,\\
w_{l,1}&=-\frac{[l-1]}{[l]}f_nw_{l-2,1}+\frac{q^{-1}[l-1]}{[2][l]}
(f_{n-1}f_n-q^{-2}f_nf_{n-1})|{\bf e}_n\rangle\ot|(l-3){\bf e}_n\rangle \\
&\hspace{8cm}
+\frac{q^{l-1}}{[l]}f_{n-1}|{\bf e_n}\rangle\ot|(l-1){\bf e}_n\rangle\quad(l\ge3),\\
w_{l,k}&=(\frac1{[2]}f_{n-1}^2f_n-q^{-1}f_{n-1}f_nf_{n-1}+q^{-2}f_nf_{n-1}^2)w_{l-2,k-2}
+\frac{q^{2l-2k+1}}{[l-k+1][l-k+2]}f_{n-1}^2w_{l,k-2}\\
&\hspace{12.6cm}(2\le k\le l),
\end{align*}
we obtain the result.
\end{proof}

\begin{lemma}
Let $W_l$ be the vector subspace of $V^{\ot2}$ spanned by 
$|\sum_{j=1}^nk_j{\bf e}_j\rangle
\ot|\sum_{j=1}^nk'_j{\bf e}_j\rangle$ such that $\sum_{j=1}^n(k_j+k'_j)=l$. Then we have
\[
W_l=\sum_{0\le k\le l}U_q(\geh_{I\setminus\{0,n\}})w_{l,k}.
\]
\end{lemma}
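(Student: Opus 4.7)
Set $U=U_q(\geh_{I\setminus\{0,n\}})$. From the Dynkin diagrams $U\cong U_q(A_{n-1})$, and the middle generators $e_j,f_j,k_j^{\pm1}$ ($1\le j\le n-1$) act identically on $V$ in all three cases of Propositions~\ref{pr:repD}--\ref{pr:repC}. Each such $e_j,f_j$ preserves the degree $|{\bf k}|=k_1+\cdots+k_n$ of an individual tensor factor, so the coproduct action respects the bigrading $V^{\ot2}=\bigoplus_{a,b\ge0}V_a\ot V_b$ with $V_c=\mathrm{span}\{|{\bf k}\rangle:|{\bf k}|=c\}$. Since $w_{l,k}\in V_k\ot V_{l-k}$, the lemma reduces to
\[
V_a\ot V_b \;=\; U\cdot w_{a+b,a},\qquad w_{a+b,a}=|a{\bf e}_{n-1}\rangle\ot|b{\bf e}_n\rangle,
\]
for every $a,b\ge0$, i.e.\ to cyclicity of $w_{a+b,a}$ in $V_a\ot V_b$.

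The key feature of $w_{a+b,a}$ is its extremal second factor: $|b{\bf e}_n\rangle$ is annihilated by $e_j$ for $1\le j\le n-1$ and by $f_j$ for $1\le j\le n-2$. Consequently, on vectors of the form $|{\bf k}\rangle\ot|b{\bf e}_n\rangle$, the coproducts $\Delta(e_j)$ (any such $j$) and $\Delta(f_j)$ ($j\le n-2$) decouple and act purely through the first factor, up to a diagonal scalar. Since $V_a$ is an irreducible $U$-module (the $a$-th quantum symmetric power of the vector representation of $U_q(A_{n-1})$), these decoupled operators suffice to reach every $|{\bf k}\rangle\ot|b{\bf e}_n\rangle$ with $|{\bf k}|=a$: the $U_q(A_{n-2})$-subalgebra acts irreducibly on the slice $k_n=0$ of $V_a$ starting from $|a{\bf e}_{n-1}\rangle$, and $\Delta(e_{n-1})$ then pushes one unit from position $n-1$ to $n$, permitting iteration. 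A similar decoupling handles $\Delta(f_{n-1})=f_{n-1}\ot1+k_{n-1}^{-1}\ot f_{n-1}$: whenever $k_n=0$, its first summand carries $[k_n]=0$ and the second shifts one unit in the second factor from position $n$ to $n-1$, so one reaches all $|{\bf k}\rangle\ot|j{\bf e}_{n-1}+(b-j){\bf e}_n\rangle$.

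The main remaining step is to reach $|{\bf k}\rangle\ot|{\bf l}\rangle$ when ${\bf l}$ carries weight in positions $1,\ldots,n-2$; here $\Delta(f_j)$ with $j\le n-2$ genuinely couples both factors and is the source of the difficulty. This is handled by induction on the weighted quantity $\Phi({\bf l})=\sum_j(n-j)\,l_j$, which increases by one under any $f_j$. Given ${\bf l}$ with $\Phi({\bf l})=r>0$, pick $j$ with $l_j>0$ and set ${\bf l}^\vee={\bf l}-{\bf e}_j+{\bf e}_{j+1}$, so $\Phi({\bf l}^\vee)=r-1$ and $|{\bf k}\rangle\ot|{\bf l}^\vee\rangle$ lies in the orbit by induction; applying $\Delta(f_j)$ to it, the term $k_j^{-1}|{\bf k}\rangle\ot f_j|{\bf l}^\vee\rangle$ produces the desired $|{\bf k}\rangle\ot|{\bf l}\rangle$ with a nonzero $q$-integer coefficient, while the first term $f_j|{\bf k}\rangle\ot|{\bf l}^\vee\rangle$ is again of the form $|{\bf k}'\rangle\ot|{\bf l}^\vee\rangle$, so subtraction isolates the target. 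The delicate point, which is the main obstacle, is verifying that this induction closes consistently at every stage, which rests on the fact that the coefficients encountered are $q$-integers, nonzero at generic $q$. An alternative route avoiding the combinatorial bookkeeping is to invoke the Clebsch-Gordan decomposition $V_a\ot V_b=\bigoplus_{k=0}^{\min(a,b)}V^{(a+b-k,k)}$ (valid for generic $q$), note that the weight space of $V_a\ot V_b$ containing $w_{a+b,a}$ is $(\min(a,b)+1)$-dimensional with one-dimensional intersection with each summand, and reduce cyclicity of $w_{a+b,a}$ to the nonvanishing of its projection onto every summand --- a $q$-binomial computation which specializes at $q=1$ to the classical decomposition of $S^aV\ot S^bV$.
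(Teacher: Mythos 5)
Your argument is correct, and your main route is genuinely different from the paper's. The paper argues entirely by soft representation theory: it identifies $W_l\cong\bigoplus_{k=0}^l L(k{\bf e}_n)\ot L((l-k){\bf e}_n)$ over $U_q(A_{n-1})$, invokes the Clebsch--Gordan (Pieri) rule to see that each summand is generated by highest weight vectors of weight $j{\bf e}_{n-1}+(l-j){\bf e}_n$, and then disposes of those weight spaces by the standard fact that for $U_q(sl_2)$ (here the node-$(n-1)$ subalgebra) the tensor product of two irreducibles is cyclic over lowest $\otimes$ highest --- which is exactly what $w_{l,k}$ is inside $L(k)\ot L(l-k)$. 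You instead make the same initial reduction to cyclicity of $w_{a+b,a}=|a{\bf e}_{n-1}\rangle\ot|b{\bf e}_n\rangle$ in $V_a\ot V_b$ and then prove it by hand: decoupling of the coproduct against the extremal vector $|b{\bf e}_n\rangle$ to sweep out $V_a\ot|b{\bf e}_n\rangle$, followed by the induction on $\Phi({\bf l})=\sum_j(n-j)l_j$ with the subtraction trick to peel off the $f_j\ot 1$ term. I checked that this induction does close: the hypothesis at level $r-1$ covers all first tensor factors with second factor ${\bf l}^\vee$, which is exactly what the subtraction step consumes, and the coefficients $[l_{j+1}+1]$ are nonzero for generic $q$. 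Your approach is longer but self-contained (no appeal to complete reducibility or the CG rule); the paper's is shorter but leans on standard facts. Your ``alternative route'' at the end is essentially the paper's argument --- note that reducing cyclicity to nonvanishing projections is legitimate here precisely because the CG summands $V^{(a+b-k,k)}$ are pairwise non-isomorphic, a point worth stating if you pursue that version.
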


\begin{proof}
As a $U_q(\geh_{I\setminus\{0,n\}})(=U_q(A_{n-1}))$-module $W_l$ is isomorphic to
$\bigoplus_{k=0}^lL(k{\bf e}_n)\ot L((l-k){\bf e}_n)$, where $L(\lambda)$ stands for
the irreducible highest weight $U_q(A_{n-1})$-module with highest weight $\lambda$.
By representation theory of $U_q(A_{n-1})$, $L(k{\bf e}_n)\ot L((l-k){\bf e}_n)$ is 
generated by the highest weight vectors of weight of the form 
$j{\bf e}_{n-1}+(l-j){\bf e}_n$
for some $0\le j\le\min(k,l-k)$. Hence, it is enough to show that any vector in $W_l$
of weight of the form $(l-j){\bf e}_n+j{\bf e}_{n-1}$ is generated by $w_{l,j}$
over $U_q(\geh_{\{n-1\}})(=U_q(sl_2))$. But it is a well-known fact from representation 
theory of $U_q(sl_2)$, namely, $L(a{\bf e}_n)\ot L(b{\bf e}_n)$ is generated by 
$|a{\bf e}_{n-1}\rangle\ot|b{\bf e}_n\rangle$, where $|a{\bf e}_{n-1}\rangle$
(resp. $|b{\bf e}_n\rangle$) is the lowest (resp. highest) weight vector.
\end{proof}

\begin{proof}[Proof of Prop. \ref{pr:irred}]
Suppose $\geh=D_{n+1}^{(2)}$ and let $W$ be a nonzero submodule of $V^{\ot2}$. 
In the proof of Prop. 12 of \cite{KO3}, we have shown that $W$ contains $v_l$ for any 
$l\ge0$.  Similarly, for $\geh=A_{2n}^{(2)}$ (resp. $C_n^{(1)}$), using Lemma 8
(resp. 10) of \cite{KO3}, we can show a nonzero submodule $W$ of $V^{\ot2}$
(resp. $V^{\epsilon_1}\ot V^{\epsilon_2}$) contains
$v_l^\epsilon$ for any $l\ge0,\epsilon=\pm$ (resp. $v_l^{\epsilon_1}$ for any
$l$ such that $(-1)^l=\epsilon_1\epsilon_2$). The claim now follows from Prop.
\ref{pr:w}.
\end{proof}

\section*{Acknowledgments}
The authors thank S. Sergeev for collaboration
in their previous work.
This work is supported by 
Grants-in-Aid for Scientific Research No.~23340007 and No.~24540203
from JSPS.
A.K. thanks the organizers of the 30th International 
Colloquium on Group Theoretical Methods in Physics at Ghent University 
during 14-18 July 2014 for hospitality.

\section*{References}

\end{document}